\newcommand{\eop}{\hfill $\blacksquare$}
\newcommand{\complex}{\mathbb{C}}
\newcommand{\zeronorm}[1]{||{#1}||_0}
\newcommand{\cnormal}{{\cal CN}}
\newcommand{\signal}{\vec{x}}
\newcommand{\transform}{\vec{X}}
\newcommand{\msignal}{\bf{x}}
\newcommand{\msignalc}[2]{x[{#1}][#2]}
\newcommand{\mtransform}{\bf{X}}
\newcommand{\mtransformc}[2]{X[{#1}][{#2}]}
\newcommand{\rr}[1]{r_{x, #1}}
\newcommand{\rc}[1]{r_{y, #1}}
\newcommand{\sr}[1]{s_{{#1}}}
\newcommand{\pprimef}[1]{{\cal Q}_{#1}}
\newcommand{\binobsv}[2]{\vec{y}_{b,#1,#2}}
\newcommand{\stages}{d}
\newcommand{\delays}{D}
\newcommand{\osratio}{r}
\newcommand{\sparsity}{k}
\newcommand{\sindex}{\delta}
\newcommand{\samples}{m}
\newcommand{\binsize}{\mathbf{F}} 
\newcommand{\primef}[1]{{\cal P}_{#1}}
\newcommand{\primefr}[1]{{\cal P}_{x,#1}}
\newcommand{\primefc}[1]{{\cal P}_{y,#1}}
\newcommand{\subr}[1]{{n}_{x, #1}}
\newcommand{\subc}[1]{{n}_{y, #1}}
\newtheorem{theorem}{Theorem}[section]
\begin{document}
\title{Fast and Efficient Sparse 2D Discrete Fourier Transform using Sparse-Graph Codes}
\author{
Frank Ong, Sameer Pawar and Kannan Ramchandran
}

\author{
    \IEEEauthorblockN{Frank Ong, Sameer Pawar, and Kannan Ramchandran}\\
    \IEEEauthorblockA{Department of Electrical Engineering and Computer Sciences\\
University of California, Berkeley
    \\\{frankong, spawar, kannanr\}@eecs.berkeley.edu}
}

\maketitle
\begin{abstract}

We present a novel algorithm, named the 2D-FFAST \footnote{Two-dimensional Fast Fourier Aliasing-based Sparse Transform}, to compute a sparse 2D-Discrete Fourier Transform (2D-DFT) featuring both low sample complexity and low computational complexity. The proposed algorithm is based on mixed concepts from signal processing (sub-sampling and aliasing), coding theory (sparse-graph codes) and number theory (Chinese-remainder-theorem) and generalizes the 1D-FFAST \footnote{Fast Fourier Aliasing-based Sparse Transform} algorithm recently proposed by Pawar and Ramchandran \cite{Pawar:2013da} to the 2D setting. Concretely, our proposed 2D-FFAST algorithm computes a  $\sparsity$-sparse 2D-DFT, with a uniformly random support, of size $N = N_x \times N_y$ using $O(\sparsity)$ noiseless spatial-domain measurements in $O(\sparsity\log\sparsity)$ computational time. Our results are attractive when the sparsity is sub-linear with respect to the signal dimension, that is, when $\sparsity \rightarrow \infty$ and $\sparsity / N \rightarrow 0$. For the case when the spatial-domain measurements are corrupted by additive noise, our 2D-FFAST framework extends to a noise-robust version in sub-linear time of $O(\sparsity\log^{4}N)$ using $O(\sparsity\log^{3}N)$ measurements. Simulation results, on synthetic images as well as real-world magnetic resonance images, are provided in Section~\ref{sec:simulations} and demonstrate the empirical performance of the proposed 2D-FFAST algorithm.

\end{abstract}

\begingroup
\let\clearpage\relax
\section{Introduction}

In many imaging applications, such as magnetic resonance angiography, computed tomography, and astronomical imaging, the image of interest has a sparse representation in the Fourier domain. Recent results in compressed sensing~\cite{Donoho:2006ci, Candes:2006eq} allow us to exploit this sparse structure to acquire and reconstruct signals from far fewer measurements than required by the Shannon-Nyquist theorem. However, the most popular class of compressed sensing based reconstruction algorithms involves alternating iteratively between the spatial domain representation and the Fourier domain representation of the signals, and consequently are computationally expensive. Hence, such algorithms have limited scope in devices and acquisition systems demanding inexpensive, low-power or real-time signal analysis. 

While many algorithms with low computational complexity \cite{Gilbert:2002dw, Gilbert:2005gg, Gilbert:2008ba, Gilbert:dga, Hassanieh:2012ey, Hassanieh:2012uq, Indyk:2014tm, Indyk:2014if} have been proposed to compute a sparse 1D discrete-Fourier-transform (DFT), extensions of these algorithms to 2D sparse signals, with recovery guarantees similar to 1D, are non-trivial and very few 2D algorithms \cite{Indyk:2014if, Ghazi:2013ek} have been proposed. Yet multidimensional signals, such as images and videos, often have much sparser representations than 1D signals and can be found in a wide range of sparse signal processing applications. Many imaging applications further require the ability to reconstruct the sparse image using only a few measurements in the Fourier domain so as to reduce imaging costs, such as acquisition time in magnetic resonance imaging or radiation dose in computed tomography. Hence, a practically feasible algorithm with \emph{both low computational complexity and low sample complexity} for computing a sparse 2D-DFT is of great interest.

\begin{figure}[h]
\includegraphics[width=\linewidth]{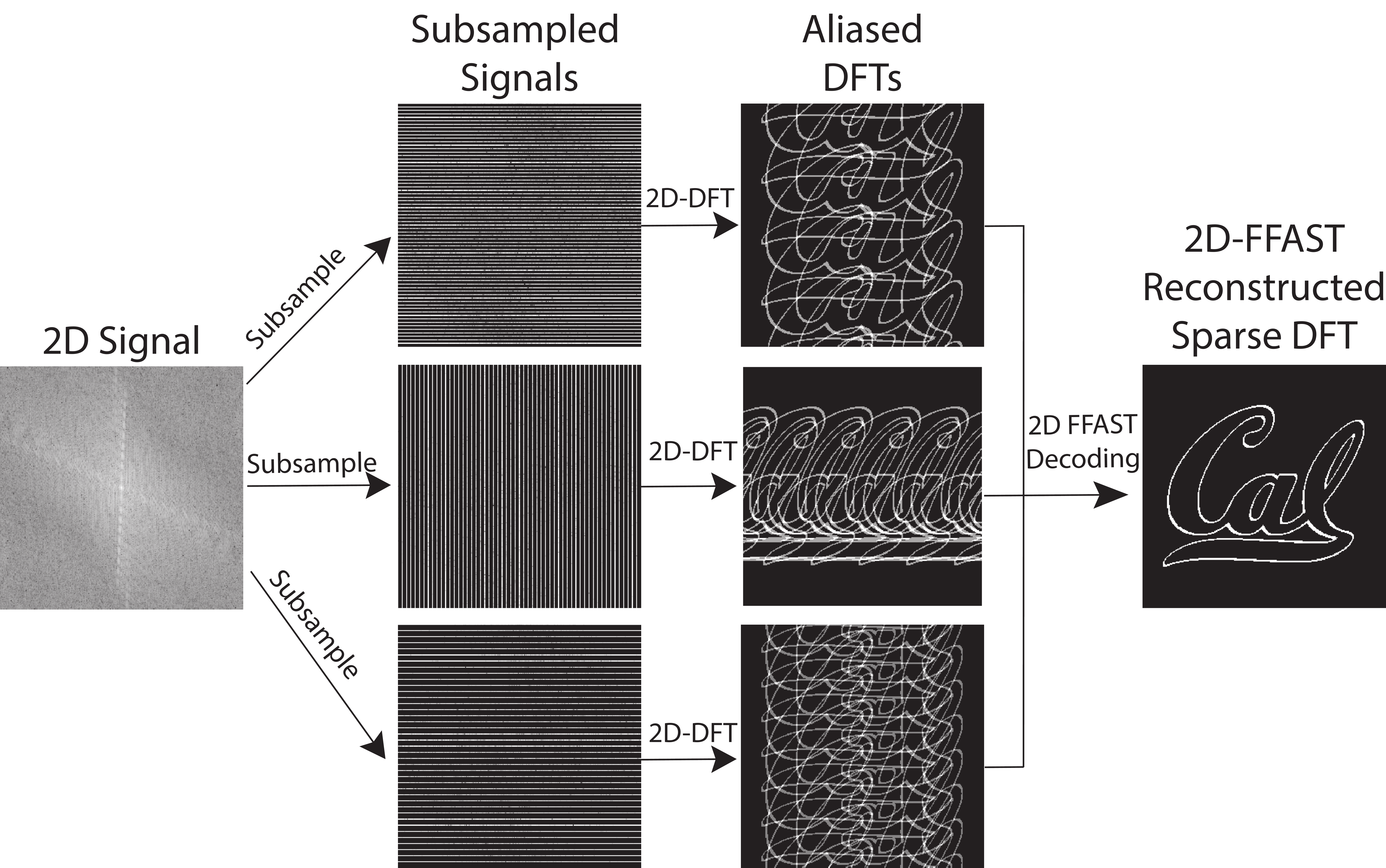}
\caption{A simplified visual illustration of how the 2D-FFAST algorithm computes a 2D sparse DFT. The 2D-FFAST algorithm uniformly subsamples the 2D signal with different sampling factors, thereby producing different aliasing patterns in the DFT domain. Each aliased DFT provides unique and independent information about the sparse DFT coefficient  and can be viewed as a sparse graph. The 2D-FFAST algorithm uses all the aliased 2D-DFTs together to reconstruct the full 2D spectrum via an onion-peeling style algorithm as described in Section \ref{sec:extension}.}
\label{fig:illustration}
\end{figure}

In this work, we present an algorithm, named the 2D-FFAST (Fast Fourier Aliasing-based Sparse Transform), to compute a sparse 2D-DFT with both low sample complexity and computational complexity. The proposed algorithm generalizes the 1D-FFAST approach in Pawar and Ramchandran~\cite{Pawar:2013da} to the 2D setting in two cases: the first case is when the 2D signal dimensions $N_x$ and $N_y$ are co-prime, that is their greatest common divisor (gcd) is $1$. In this case, there is a unique one-to-one mapping between the 1D-DFT and 2D-DFT and in Section~\ref{sec:good} we show that the 1D-FFAST described in Pawar and Ramchandran~\cite{Pawar:2013da} can be used in a straightforward manner. 

Our key contribution is the design of the 2D-FFAST algorithm for the second and more general case, in which we consider a wider class of 2D signals as described in Section \ref{sec:model}. Although in this case, the 2D-DFT cannot be trivially mapped into a 1D-DFT, we show that the 1D-FFAST architecture can still be lifted to the 2D setting as shown in Section \ref{sec:extension} and hence the recovery guarantees are preserved as well. 

At a high level, the 2D FFAST algorithm induces sparse graph codes in the 2D-DFT domain via a Chinese-Remainder-Theorem (CRT)-guided 2D sub-sampling operation in the spatial-domain. The resulting sparse graph codes are then exploited to devise a fast and iterative onion-peeling style algorithm that computes 2D-DFT, while achieving both low sample complexity and low computational complexity. A simplified visual illustration of the mechanics of the 2D-FFAST architecture is provided in Figure \ref{fig:illustration} and \ref{fig:illustration2}. 

Performance-wise, for the case when the signal-domain samples are observed without any noise, our proposed 2D-FFAST algorithm computes a $\sparsity$-sparse 2D-DFT of size $N = N_x \times N_y$, using $O(\sparsity)$ measurements in $O(\sparsity\log\sparsity)$ computations for $\sparsity = o(N)$, that is when $\sparsity \rightarrow \infty$ and $\sparsity / N \rightarrow 0$. Further, when the observed spatial-domain samples are corrupted by additive white Gaussian noise, the 2D-FFAST framework computes a $\sparsity$-sparse 2D-DFT in sub-linear time of $O(\sparsity\log^{4}N)$ using $O(\sparsity\log^{3}N)$ measurements.

Before diving into the main results, we emphasize the following caveats of our algorithm and analysis: 

\begin{enumerate}
\item Our proposed 2D-FFAST algorithm does not apply to arbitrary 2D signal dimensions but only to a selected but arbitrarily large set of dimensions as described in Section \ref{sec:model}. However, as we have demonstrated in the same section, two dimensional choices that are close to a user desired dimension and satisfy the 2D-FFAST requirement can be found.
\item Our analytical results are probabilistic and hold for asymptotic values of $\sparsity$ and signal dimensions $N_x \times N_y$, with a success probability that approaches 1 asymptotically. 
\item Our analytical results also assume a uniformly random model for the support of the non-zero DFT coefficients, which does not describe real-world signal accurately. 
\item For the noisy case, we assume that the non-zero DFT coefficients belong to an arbitrarily large but finite constellation such that the effective signal-to-noise ratio is finite for analysis purpose. However, as we will show in the simulation section (Section~\ref{sec:simulations}), our 2D-FFAST simulation results on synthetic images as well as real world magnetic resonance images (MRI) show good empirical performance even though though these signals have a non-uniform support for the dominant DFT coefficients that further take values from arbitrary complex constellations.
\end{enumerate}

\begin{figure}[h]
\includegraphics[width=\linewidth]{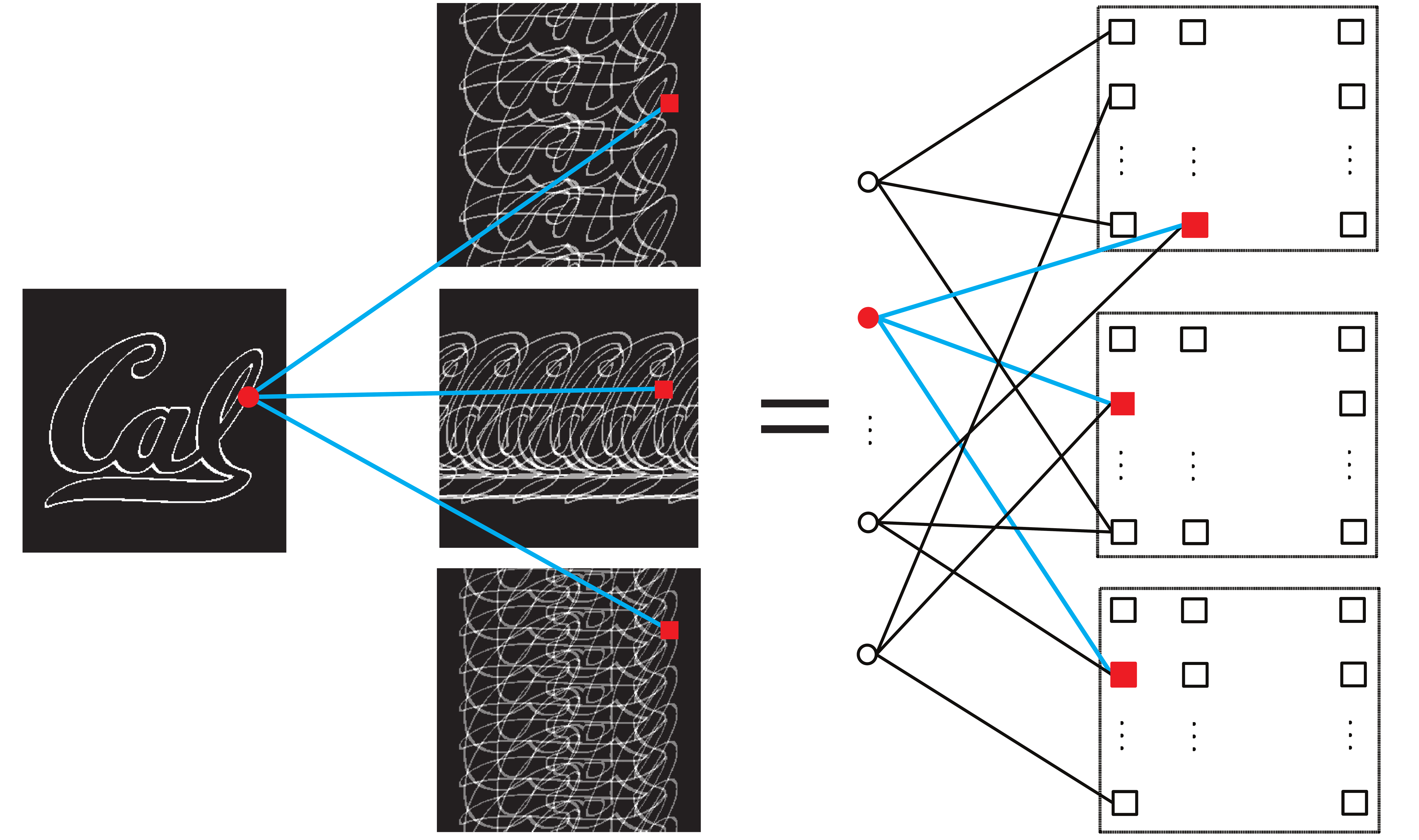}
\caption{A visual illustration of how the 2D-FFAST algorithm transforms different aliasing patterns into a balls-and-bins sparse graph. This effectively transforms the 2D sparse DFT problem into a sparse graph decoding problem.}
\label{fig:illustration2}
\end{figure}

\subsection{Related Work}\label{sec:related}
A number of previous works \cite{Gilbert:2002dw, Gilbert:2005gg, Gilbert:2008ba, Gilbert:dga, Hassanieh:2012ey, Hassanieh:2012uq, Indyk:2014tm, Pawar:2013da} have addressed the problem of computing a 1D DFT of an $ N$-length signal that has a $\sparsity$-sparse Fourier transform, in sub-linear time and sample complexity. Most of these algorithms achieve a sub-linear time performance by first isolating the non-zero DFT coefficients into different groups/bins, using specific filters. The non-zero DFT coefficients are then recovered iteratively, one at a time. The filters used for the binning operation are typically of length $O(\sparsity\log N)$. As a result, the sample and the computational complexity is $O(\sparsity\log N)$ or more even for noiseless measurements. The 1D-FFAST algorithm in Pawar and Ramchandran \cite{Pawar:2013da} is the first algorithm to compute a $\sparsity$-sparse 1D-DFT of an $ N$-length signal using $O(\sparsity)$ measurements and $O(\sparsity\log\sparsity)$ computations for noiseless measurements. The 1D-FFAST algorithm is further extended in \cite{Pawar:2014gx} to additive Gaussian noise corrupted measurements using $O(\sparsity\log^{3} N)$ measurements.

Unlike 1D-DFT, there are very few algorithms designed for multi-dimensional DFT. The algorithm in Gilbert et al. \cite{Gilbert:2005gg} achieves $O(\sparsity\log^c N)$ sample and time complexity for computing a $\sparsity$-sparse $N=N_x\times N_y$ 2D-DFT, where $c$ is some positive constant. In Ghazi et al. \cite{Ghazi:2013ek}, the authors propose a sub-linear time algorithm for computing a $\sparsity$-sparse 2D-DFT for 2D signals with equal dimension with $N_x = N_y = \sqrt{N}$, where the dimensions are powers of $2$. For noiseless measurements with $\sparsity = O(\sqrt{N})$, the algorithm uses $O(\sparsity)$ measurements and $O(\sparsity\log\sparsity)$ time. For the general sub-linear sparsity regime the computational complexity is $O(\sparsity\log\sparsity + \sparsity(\log\log N)^c)$ for some constant $c$. The algorithm also succeeds with a {\em constant probability} that does not approach $1$, which generally translates to inferior empirical results as well. In Indyk et al. \cite{Indyk:2014if}, the proposed algorithm achieves $O(\sparsity \log N )$ sample complexity and $O(N \log^c N)$ computational complexity for general sub-linear sparsity $k$.

In contrast, the 2D-FFAST algorithm is the first to compute an exactly $\sparsity$-sparse $N = N_x\times N_y$-point DFT that has all of the following features:

\begin{itemize}
\item it has $O(\sparsity)$ sample complexity and $O(\sparsity\log\sparsity)$ computational complexity for noiseless observations
\item it can compute a noise robust $\sparsity$-sparse 2D-DFT in sub-linear time of $O(\sparsity\log^{4}N)$ using $O(\sparsity\log^{3}N)$ measurements.
\item it covers the {\em entire} sub-linear regime ($\sparsity \propto N^{\sindex}, 0 < \sindex < 1$)
\item it has a probability of failure that vanishes to zero asymptotically in the problem dimension $N$.
\end{itemize}



\section{Signal model and problem formulation}\label{sec:model}
In this section, we describe our signal model and problem formulation in detail. Specifically, we consider a 2D signal $\msignal \in \complex ^{N_x \times N_y}$ that is a sum of $\sparsity$ complex exponentials:

\begin{equation}\label{eq:2dft}
\msignalc{a}{b} = \sum_{t=0}^{\sparsity-1} \mtransformc{u_t}{v_t}e^{\imath2\pi a u_t/N_x}e^{\imath2\pi b v_t/N_y}
\end{equation}
where the spatial indices $0 \le a \le N_x-1, 0 \le b \le N_y - 1$, the discrete frequencies $ ( u_t, v_t ) \in \{ (0,0), \hdots , (N_x-1, N_y-1) \} $ and the amplitudes $\mtransformc{u_t}{v_t} \in \complex$.

Our signal model assumes that the number of complex exponentials, hence the number of non-zero 2D-DFT coefficients, is sub-linear with respect to the signal dimension, that is $\sparsity \propto N^{\sindex}$, where $N = N_x \times N_y$ and $0 < \sindex < 1$. 

In addition, our 2D-FFAST algorithm requires that the 2D signal dimension $N = N_x \times N_y$ can be factorized into $d$ factors $\{ \pprimef{i} \}_{i=0}^{d-1}$, where $d$ is an appropriately chosen constant (usually chosen as $3$) and each factor is on the order of $k^{1/\sindex d}$ with the factors $\{ \pprimef{i} \}_{i=0}^{d-1}$ being pairwise co-prime. That is,
 
 \begin{equation}
 N = \pprimef{0} \pprimef{1} \hdots \pprimef{d-1}
 \end{equation}
 where $\pprimef{i} \sim k^{1/\sindex d}$ and $\text{gcd}(\pprimef{i},  \pprimef{j}) = 1$ for $i \ne j$. 
 
We note that the dimension assumption is also required by the 1D-FFAST and hence we impose no additional constraint on the 2D signal dimension $(N_x, N_y)$ other than that the total dimension $N= N_x \times N_y$ satisfies the 1D-FFAST dimension requirement. While the dimension assumption limits the application to arbitrary 2D dimensions, this constraint is flexible in practice as long as the user has reasonable control on the signal dimension. For example, suppose we want to target a 2D DFT computation with dimension $N = N_x \times N_y$ around $512 \times 256$ and  sparsity $k  \approx 2500 \approx N^{2/3}$. Then choosing $d = 3$, we have each factor $\pprimef{i} \approx 50$. Hence, we can choose $\pprimef{0} = 49, \pprimef{i} = 50, \pprimef{2} = 51$, and split the factors between $N_x$ and $N_y$ to obtain $N_x = 51 \times 10 = 510$ and $N_y = 5 \times 49 = 245$, which are close to the targeted 2D signal dimension.

For noisy observations, we assume an additive Gaussian noise model, where the observed spatial-domain sampled $y$ are corrupted by white Gaussian noise, that is,

\begin{equation}\label{eq:noisy2dft}
y = x + z
\end{equation}
where $z \in \cnormal ( 0, I )$ ( complex Gaussian i.i.d. noise )

Further, we assume that all the non-zero 2D-DFT coefficients belong to a finite constellation set such that the effective signal-to-noise ratio is finite, that is, $\mathcal {A} = \{ \alpha e^{\imath \Phi} | \alpha \in \mathcal{M}, \phi \in  \Phi \} $, where:
	\begin{itemize}
		\item $\mathcal{M} = \{ \sqrt{ \rho } / 2 + k \sqrt{ \rho } / M_1 \}_{k=0}^{M_1}$ 
		\item $\Phi = \{ 2 \pi k / M_2 \} _{k=0}^{M_2 - 1}$
		\item $\rho$ is a desired signal-to-noise ratio, which is defined as $\rho = \mathbb{E}_{[X[u][v] \ne 0]} \{ |X[u][v]|^2 \} / \mathbb{E} \{ \| z \|^2 / N \}$
		\item $M_1$, $M_2$ are some constants.
	\end{itemize}
The finite constellation assumption is purely for analysis purpose and does not influence the algorithm implementation in practice.

The 2D-FFAST framework proposed in this paper is an extension of the 1D-FFAST architecture from \cite{Pawar:2013da}, hence for ease of exposition, we closely follow the notations used in \cite{Pawar:2013da}, which are summarized in Table~\ref{tab:glossary}.

\def\arraystretch{1.5}
\begin{table}[h]
\begin{center}
\begin{tabular}{ | p{0.15\linewidth} | p{0.8\linewidth} |}
  \hline
  Notation & Description\\
  \hline
  $N_x,N_y$ & Ambient signal size in each of the $2$ dimensions respectively. \\  
  \hline
  $N$ & Ambient signal size $N = N_xN_y$ \\  
  \hline
  $\sparsity$ & Number of non-zero coefficients in the 2D-DFT $\mtransform$.\\ 
  \hline
  $\sindex$& Sparsity-index: $\sparsity \propto N^{\sindex}, 0 < \sindex < 1$.\\
  \hline
  \multirow{2}{*}{$\samples$}& Sample complexity: Number of measurements of $\msignal$ used by the 2D-FFAST \\
 &  algorithm to compute the 2D-DFT $\mtransform$.\\
  \hline
  $\osratio = \samples/k$ & Oversampling ratio: Number of measurements per non-zero DFT coefficient.\\
  \hline
  $\stages$& Number of stages in the ``sub-sampling front end" of the 2D-FFAST architecture.\\
  \hline
\end{tabular}
\end{center}
\caption{Glossary of important notations and definitions used in the rest of the paper.}\label{tab:glossary}
\end{table}

We now describe the Chinese-Remainder-Theorem (CRT) which plays an important role in our proposed 2D-FFAST architecture. For integers $a,n$, we use $(a)_n$ to denote the modulo operation, that is, $(a)_n\triangleq a \mod n$.

\begin{theorem}[Chinese-Remainder-Theorem \cite{Blahut:2010wo}]\label{ch3thm:CRT}
Suppose $n_0,n_1,\hdots,n_{d-1}$ are pairwise co-prime positive integers and $N = \prod_{i=0}^{d-1}n_i$. Then, every integer `$a$' between $0$ and $N-1$ is uniquely represented by the sequence $r_0,r_1,\hdots,r_{d-1}$ of its remainders modulo $n_0,\hdots,n_{d-1}$ respectively and vice-versa. 
\end{theorem}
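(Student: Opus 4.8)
The plan is to prove the claimed one-to-one correspondence by showing that the natural ``remainder map'' is a bijection between two finite sets of equal size. First I would introduce the map
\[
\phi : \{0, 1, \ldots, N-1\} \longrightarrow \prod_{i=0}^{d-1}\{0,1,\ldots,n_i-1\}, \qquad \phi(a) = \big((a)_{n_0}, \ldots, (a)_{n_{d-1}}\big),
\]
which sends each integer to its tuple of remainders. Since the product $N = \prod_{i=0}^{d-1} n_i$ is exactly the cardinality of the codomain, the domain and codomain have the same number of elements, so it suffices to prove that $\phi$ is injective; injectivity of a map between finite sets of equal cardinality automatically forces surjectivity by pigeonhole, and hence yields a bijection, which is precisely the asserted two-way uniqueness.

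To prove injectivity I would suppose $\phi(a) = \phi(b)$ for integers $a, b \in \{0, \ldots, N-1\}$. Then $(a)_{n_i} = (b)_{n_i}$ for every $i$, which is equivalent to $n_i \mid (a - b)$ for each $i$. The crux is then the elementary divisibility fact that if pairwise co-prime integers $n_0, \ldots, n_{d-1}$ each divide a common integer $m$, their product $N$ also divides $m$. Applying this with $m = a - b$ gives $N \mid (a-b)$; but $0 \le a, b \le N-1$ forces $|a-b| < N$ and hence $a = b$, establishing injectivity and therefore the bijection.

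The main obstacle, and really the only step that is not bookkeeping, is that divisibility lemma. I would prove it by induction on $d$, using at the inductive step the standard facts that $\gcd(a,b)=1$ together with $a \mid m$ and $b \mid m$ implies $ab \mid m$, and that co-primality is preserved under products, i.e. $\gcd(n_i, \prod_{j \ne i} n_j) = 1$ whenever the $n_i$ are pairwise co-prime. Both of these follow directly from B\'ezout's identity.

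Alternatively, and more in the spirit of the explicit reconstruction used later in the paper, I would exhibit the inverse map directly to obtain surjectivity. Setting $N_i = N/n_i$, pairwise co-primality guarantees $\gcd(N_i, n_i) = 1$, so there is an integer $M_i$ with $M_i N_i \equiv 1 \pmod{n_i}$, and then $a = \big(\sum_{i=0}^{d-1} r_i M_i N_i\big) \bmod N$ realizes any prescribed remainder tuple $(r_0, \ldots, r_{d-1})$, since each term other than the $i$-th vanishes modulo $n_i$. This proves existence of a pre-image for every tuple and, combined with the equal-cardinality counting argument, again closes the proof.
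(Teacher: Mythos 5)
Your proof is correct, but there is nothing in the paper to compare it against: the paper does not prove this theorem at all. It is quoted as a classical result with a citation to Blahut's textbook, and the only proof-adjacent content is the remark immediately following the statement, that Gauss's algorithm recovers $a$ from a given remainder sequence. Both of your arguments are sound. The counting argument is the standard abstract proof: injectivity of the remainder map plus equal cardinality of domain and codomain forces bijectivity, and you correctly identify the one non-trivial ingredient, namely that pairwise co-prime $n_0,\ldots,n_{d-1}$ each dividing $m$ implies $\prod_{i=0}^{d-1} n_i \mid m$, which your induction via B\'ezout handles properly. Your alternative constructive argument, taking $N_i = N/n_i$, choosing $M_i$ with $M_i N_i \equiv 1 \pmod{n_i}$, and setting $a = \bigl(\sum_{i=0}^{d-1} r_i M_i N_i\bigr) \bmod N$, is precisely the Gauss's algorithm the paper alludes to; it has the added merit of being explicit, which is what the FFAST decoder actually uses when it maps recovered remainders (bin indices) back to a support location. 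Either route alone suffices, so presenting both is slightly redundant but harmless; if you keep only one, the constructive one is closer in spirit to how the theorem is used in this paper.
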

Further, given a sequence of remainders $r_0,r_1,\hdots,r_{d-1}$, where $0 \leq r_i < n_i$, Gauss's algorithm can be used to find an integer `$a$', such that,
\begin{equation}
(a)_{n_i} \equiv r_i \ \text{ for } i = 0,1,\hdots,d-1.
\end{equation}


\section{Main Result}\label{sec:result}

Given the signal model in Section~\ref{sec:model}, we propose a 2D-FFAST algorithm for computing a $\sparsity$-sparse 2D-DFT with noiseless observations. For noisy observations, we describe a robust extension of the 2D-FFAST called 2D-R-FFAST that aims at robustifying 2D-FFAST that achieves sub-linear computational complexity with (slightly) higher sample complexity for noise robustness. Table~\ref{tab:complex} summarizes the noise model, sample complexity and computational complexity for the proposed algorithms.

\def\arraystretch{1.5}
\begin{table}[h]
\centering
  \begin{tabular}{ | p{0.25\linewidth} || p{0.25\linewidth} | p{0.3\linewidth} |  }
    \hline
    Algorithm & 2D-FFAST & 2D-R-FFAST \\ 
   		   & (See Section~\ref{sec:extension}) & (See Section~\ref{sec:noisy}) \\   \hline
    Noise model  & Noiseless &  Noisy with sufficiently high SNR \\ \hline
    Sample complexity  & $ O( \sparsity )$ &   $ O( \sparsity \log^{3} N  )$ \\ \hline
    Computational complexity & $  O( \sparsity \log \sparsity  )$ &  $ O( \sparsity \log^{4} N )$ \\ \hline
  \end{tabular}
\caption{Summary of sample and computational complexity for proposed algorithms}\label{tab:complex}
\end{table}

Concretely, consider the signal model as stated in Section~\ref{sec:model}, our main results are given by the following theorems:

\begin{theorem}\label{thm:main}
For any $0 < \sindex < 1$, and large enough $N = N_x N_y$, the 2D-FFAST algorithm computes the \sparsity-sparse 2D-DFT  of an $(N_x \times N_y)$-size input $\msignal$, where $\sparsity = O( N^\sindex)$, with the following properties:
\begin{enumerate}
\item {\bf Sample complexity:} The algorithm needs $\samples = O( \sparsity )$ measurements
\item {\bf Computational complexity:}  The computational complexity of the 2D-FFAST algorithm is $O(\samples \log(\samples))$
\item {\bf Probability of success:} The probability that the algorithm correctly computes the \sparsity-sparse 2D-DFT $\mtransform$ \ is at least $1 - O( 1 / m )$ .
\end{enumerate}
\end{theorem}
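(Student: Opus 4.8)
The plan is to reduce the three claims to the corresponding guarantees already established for the 1D-FFAST of \cite{Pawar:2013da}, by exhibiting that the 2D sub-sampling front end induces exactly the same class of sparse bipartite (``balls-and-bins'') graphs whose peeling-based decoding is understood. I would organize the argument around the bipartite graph in which the $\sparsity$ non-zero 2D-DFT coefficients are the left (variable) nodes and the aliased bins produced across the $\stages$ sub-sampling stages are the right (check) nodes, with an edge joining each coefficient to every bin into which it aliases.

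First, I would treat the co-prime case $\gcd(N_x,N_y)=1$ as a warm-up: by the Chinese-Remainder-Theorem (Theorem~\ref{ch3thm:CRT}), applying the CRT reindexing to both the spatial and the frequency indices identifies the 2D-DFT of an $N_x\times N_y$ signal with the 1D-DFT of a length-$N$ signal. Under this identification a uniformly random 2D support maps to a uniformly random 1D support, so the 2D problem becomes literally a 1D-FFAST instance and all three claims follow from \cite{Pawar:2013da} (Section~\ref{sec:good}).

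Second --- and this is where the real work lies --- I would handle the general case of Section~\ref{sec:extension}, where no single CRT bijection collapses the 2D spectrum to 1D. Here I would (i) pick, for each stage $i$, a CRT-guided 2D sub-sampling pattern retaining a set of the factors $\{\pprimef{i}\}$ whose product is $\Theta(\sparsity)$, so that each aliased signal is a small 2D-DFT computable by an FFT; (ii) show that two distinct frequencies collide in a stage iff their coordinates agree modulo the retained factors, and that, because the $\{\pprimef{i}\}$ are pairwise co-prime, the stage-wise bin assignments of a coefficient behave, under the uniformly-random support model, as independent near-uniform hashes --- precisely the statistical property defining the 1D-FFAST graph ensemble. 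I would then verify that \emph{singleton detection} lifts: augmenting each stage with a constant number of deterministic 2D \emph{delays} (spatial shifts) lets the decoder read off, from the phase ratios across delayed observations, both the 2D position $(u,v)$ and the complex amplitude of any coefficient that is alone in its bin, while a zero-test distinguishes empty bins. Establishing this graph-ensemble equivalence, and that the delay/phase identification is unambiguous on the 2D lattice, is the main obstacle.

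Once the ensemble equivalence is in place, the remaining part invokes the sparse-graph-code machinery: with oversampling ratio $\osratio=\samples/\sparsity$ chosen above the density-evolution threshold for the $\stages$-stage ensemble, the onion-peeling decoder (resolve a singleton, subtract it from its other bins, repeat) strips all $\sparsity$ coefficients except with probability $O(1/\sparsity)=O(1/\samples)$, giving claim~3. For the counting, each of the $\stages=O(1)$ stages uses $O(\sparsity)$ samples across its constantly-many delays, so $\samples=O(\sparsity)$ (claim~1); computing the per-stage small 2D-DFTs costs $O(\sparsity\log\sparsity)$ by the FFT, and the peeling itself touches $O(\sparsity)$ edges and is dominated by the FFT cost, so the total is $O(\samples\log\samples)$ (claim~2).
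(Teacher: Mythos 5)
Your proposal is correct and follows essentially the same route as the paper: the co-prime case is dispatched by the Good--Thomas/CRT mapping to an off-the-shelf 1D-FFAST instance, and the general case by showing that the CRT-guided 2D sub-sampling front end induces a bipartite graph from the same ensemble as the 1D-FFAST, after which all three claims are inherited from \cite{Pawar:2013da}. The only difference is one of precision: where you argue the stage-wise bin assignments behave as independent near-uniform hashes, the paper makes this exact by relabeling each pair of row/column remainders $(\rr{\ell},\rc{\ell})$ as the single index $\sr{\ell} = \rr{\ell}\primefc{\ell}+\rc{\ell}$, so that the 2D graph is \emph{literally} a member of the 1D-FFAST ensemble (shown for $\sindex = 2/3$ and asserted to extend to all $0<\sindex<1$), which is precisely the ``ensemble equivalence'' you identified as the main obstacle.
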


\begin{proof} For the case when $N_x$ and $N_y$ are co-prime, there exists a unique one-to-one mapping from 2D-DFT to a 1D-DFT as shown in Section~\ref{sec:good}. For the general case, in Appendix~\ref{app:main}, we show that there exists a one-to-one mapping from a 2D-FFAST architecture to a 1D-FFAST architecture.
\end{proof}

\begin{theorem}\label{thm:main_r}
For a sufficiently high signal-to-noise-ratio, any $0 < \sindex < 1$, and a large enough $N = N_x N_y$, the 2D-R-FFAST algorithm computes the \sparsity-sparse 2D-DFT  of an $(N_x \times N_y)$-size input $\msignal$, where $\sparsity = O( N^\sindex)$, with the following properties:
\begin{enumerate}
\item {\bf Sample complexity:} The algorithm needs $\samples = O( \sparsity \log^{3}( N ) )$ measurements
\item {\bf Computational complexity:}  The computational complexity of the 2D-FFAST algorithm is $O( \sparsity \log^{4}( N ) )$
\item {\bf Probability of success:} The probability that the algorithm correctly computes the \sparsity-sparse 2D-DFT $\mtransform$ \ is at least $1 - O( 1 / m )$ .
\end{enumerate}
\end{theorem}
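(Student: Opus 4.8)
The plan is to mirror the proof of Theorem~\ref{thm:main}, reducing the noisy 2D problem to the noisy 1D problem rather than re-analyzing the sparse-graph decoder from scratch. Concretely, I would invoke the one-to-one mapping between the 2D-FFAST and 1D-FFAST architectures established in Appendix~\ref{app:main}, and then argue that the \emph{same} CRT-induced re-indexing carries the 2D-R-FFAST architecture onto the 1D-R-FFAST architecture of~\cite{Pawar:2014gx}. Once this correspondence is in place, the sample complexity $m = O(\sparsity\log^3 N)$, the computational complexity $O(\sparsity\log^4 N)$ (consistent with $O(m\log m)$-type FFT processing on the sub-sampled, delayed streams, since $\log m = O(\log N)$ for $\sparsity = O(N^\sindex)$), and the success probability $1 - O(1/m)$ all follow by transporting the 1D-R-FFAST guarantees back to the 2D setting.

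First I would recall that, because the factors $\{\pprimef{i}\}_{i=0}^{d-1}$ are pairwise co-prime with $N = \prod_i \pprimef{i}$, the CRT (Theorem~\ref{ch3thm:CRT}) induces a bijection between the $N$ spatial samples indexed by $(a,b)\in\mathbb{Z}_{N_x}\times\mathbb{Z}_{N_y}$ and the samples of a 1D signal indexed by $\mathbb{Z}_N$, together with the matching bijection on frequency indices $(u,v)\leftrightarrow\ell$. The heart of the argument is to show that this re-indexing commutes with the entire noisy front-end, not merely with the noiseless sub-sampling: each 2D circular shift $(s,t)$ applied before sub-sampling must map to a 1D circular shift $\tau$ so that the induced phase $e^{\imath 2\pi(su/N_x + tv/N_y)}$ agrees with the 1D phase $e^{\imath 2\pi\tau\ell/N}$ under the frequency bijection. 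Granting this, the per-bin observation vector produced by the $O(\log N)$ delay chains of 2D-R-FFAST is identical, entry by entry, to the one produced by 1D-R-FFAST, so the bin-detection and singleton-estimation subroutines, together with their associated error probabilities, transfer verbatim.

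Next I would verify that the probabilistic model is preserved. Since the re-indexing is a coordinate permutation, it is unitary, so the i.i.d.\ noise $z\sim\cnormal(0,I)$ maps to i.i.d.\ complex-Gaussian noise of the same variance, and both the signal and the noise energy, hence the effective SNR $\snr$, are preserved; the finite-constellation assumption is index-free and carries over trivially. Moreover the uniformly-random 2D support maps to a uniformly-random 1D support, which is exactly what the sparse-graph peeling analysis underlying the 1D bound requires. With the architecture, the noise statistics, and the support distribution all matched, I would then apply the 1D-R-FFAST theorem of~\cite{Pawar:2014gx} to conclude the three claimed properties.

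I expect the main obstacle to be the phase-consistency step. In the noiseless case only the aliasing (coset) structure needs to line up, but in the noisy case the decoder relies on the precise phases observed across the $O(\log N)$ delays to locate and estimate a singleton, so the 2D--1D correspondence must be exact at the level of phases, not merely of supports. The subtlety is sharpest when a factor $\pprimef{i}$ is split across the two dimensions (as in the $N_x=510,\,N_y=245$ example, where $\pprimef{1}=50$ splits as $10\times 5$): there the shifts in the $x$- and $y$-directions must combine through the CRT so that their joint phase reproduces a single 1D shift. Establishing this phase-exactness, and checking that the full delay budget required by the 1D robustness guarantee is always realizable by legitimate 2D shifts, is where the real work lies; the complexity and probability bookkeeping is then routine.
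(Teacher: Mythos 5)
Your reduction strategy breaks down at exactly the step you flag as ``the main obstacle,'' and that obstacle is not merely technical work left to do: it is insurmountable in the general case that Theorem~\ref{thm:main_r} must cover. A phase-exact correspondence between 2D circular shifts and 1D circular shifts would require a group isomorphism between $\mathbb{Z}_{N_x}\times\mathbb{Z}_{N_y}$ (the group of 2D shifts, equivalently of 2D frequencies) and $\mathbb{Z}_N$. When $\gcd(N_x,N_y)=g>1$ --- e.g., square images, which is precisely the case the general architecture of Section~\ref{sec:extension} exists to handle --- the group $\mathbb{Z}_{N_x}\times\mathbb{Z}_{N_y}$ has exponent $\mathrm{lcm}(N_x,N_y)=N/g<N$, hence is not cyclic, so no 2D shift $(s,t)$ can reproduce the phase behavior $e^{\imath 2\pi \tau \ell/N}$ of a 1D shift simultaneously for all frequencies. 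Note also that the mapping of Appendix~\ref{app:main} that you propose to invoke is a CRT re-indexing of supports and bins used only to identify the \emph{bipartite graph ensembles} (which coefficient aliases into which bin); it is not, and cannot be, a signal-level map commuting with shifts and the DFT. That weaker combinatorial identification suffices for the noiseless peeling analysis, where only the coset structure matters, but noisy singleton detection and location estimation depend on the actual observed phases, so your ``transfer verbatim'' step fails exactly where the noise robustness is needed.

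The missing idea --- and the one the paper actually uses --- is \emph{separability} of the 2D support estimation. The row index $u$ of a singleton is estimated using only shifts of the form $(s,0)$, whose phases $\omega_{N_x}^{us}$ depend on $u$ alone, and the column index $v$ using only shifts $(0,t)$, with phases $\omega_{N_y}^{vt}$; each of these is literally the 1D robust singleton-estimation problem of~\cite{Pawar:2014gx} over ambient dimensions $N_x$ and $N_y$ respectively. Combining the noiseless graph-ensemble argument of Appendix~\ref{app:main} (which governs the peeling process) with these two independent per-dimension 1D robust estimators (which govern the bin processing and require $O(\log^3 N)$ delay chains per stage) yields the claimed $O(\sparsity\log^3 N)$ sample complexity, $O(\sparsity\log^4 N)$ computational complexity, and vanishing failure probability. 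Your bookkeeping about the noise remaining i.i.d.\ Gaussian and uniform support mapping to uniform support would be fine if the reduction existed, but the reduction itself does not exist in the non-co-prime case.
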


\begin{proof} Please see Appendix~\ref{app:main_r} for the proof for theorem \ref{thm:main_r}.
\end{proof}

Note, that although the main results are for asymptotic values of $\sparsity$, it applies for any signal that has $\zeronorm{\mtransform} \leq \sparsity$. Hence, the regime of $\sindex = 0$ (esp. constant $\sparsity$) is covered by the 2D-FFAST algorithm designed to operate for any $\sindex > 0$, at the expense of being sub-optimal.

\section{FFAST Architecture for 2D Signals with co-prime dimensions }\label{sec:good}
We first discuss a simple extension of the 1D-FFAST \cite{Pawar:2013da} to the 2D setting for the case when the dimensions $N_x$ and $N_y$ are co-prime. When the dimensions are co-prime, there exists a one-to-one mapping between a 2D signal and a 1D signal, such that the 2D-DFT of a 2D signal can be computed using the 1D-DFT of the mapped 1D signal. Hence, one can use the 1D-FFAST architecture proposed in \cite{Pawar:2013da} {\em off-the-shelf}, in conjunction with this mapping, to compute the $\sparsity$-sparse 2D-DFT. 
 
\begin{figure}[h]
\includegraphics[width=\linewidth]{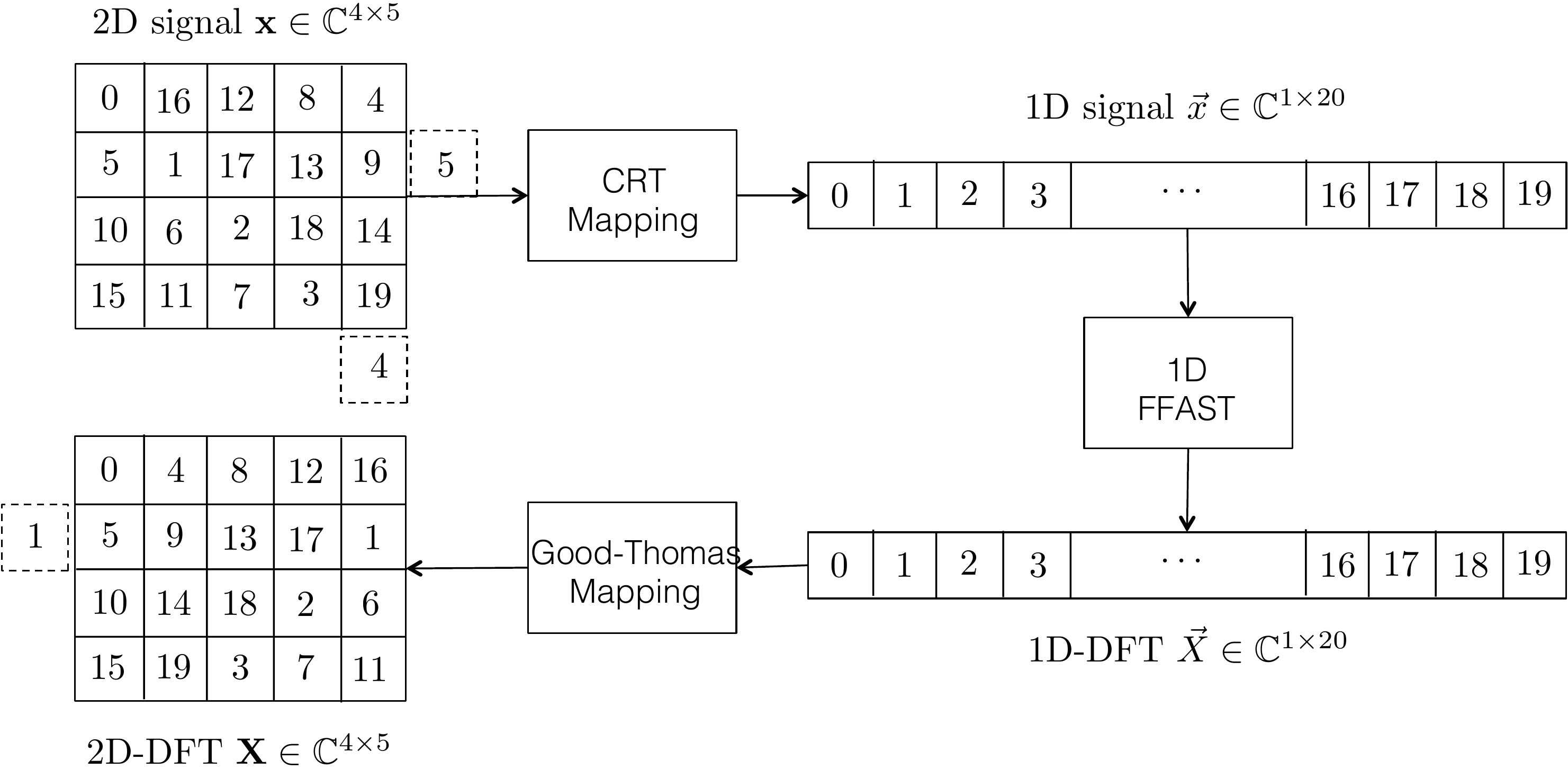}
\caption{Architecture for computing the 2D-DFT of a 2D signal with co-prime dimensions, e.g., $N_x = 4, N_y = 5$. First, we use a forward mapping based on the Chinese-Remainder-Theorem (CRT) to convert a 2D signal $\msignal$ in to a 1D signal $\signal$. Then, we use our 1D-FFAST architecture to compute the 1D-DFT, $\transform$. We transform back the result from 1D-DFT $\transform$, using a reverse mapping based on Good-Thomas algorithm, to get the 2D-DFT $\mtransform$.}
\label{fig:goodmap}
\end{figure}

We will use an example to illustrate the mapping. Consider an example signal $\msignal \in \complex^{4\times5}$, that has a sparse 2D-DFT. We use the Chinese-Remainder-Theorem (CRT) based forward mapping \cite{Good:1958ix,Thomas:1963ux}, to obtain a 1D signal $\signal$ from the 2D signal $\msignal$ as follows: The 1D vector $\signal$ is constructed by reading out the measurements from the 2D signal $\msignal$ in a diagonally downward direction, towards right, starting from top left corner as shown in Fig.~\ref{fig:goodmap}. A uniformly random support in 2D-DFT corresponds to a uniformly random support in CRT mapped 1D-DFT. Hence, we use the 1D-FFAST architecture from \cite{Pawar:2013da}, to compute the sparse 1D-DFT, $\transform$. Then, we perform the reverse mapping of 1D-DFT $\transform$ to the 2D-DFT $\mtransform$ using the Good-Thomas algorithm \cite{Good:1958ix,Thomas:1963ux}. The elements of the 1D vector $\transform$ are read out sequentially and placed in a diagonally downward direction, but towards left, starting from the top-left corner as shown in Fig.~\ref{fig:goodmap}.

In \cite{Good:1958ix,Thomas:1963ux}, the authors show that the forward and the reverse mapping can be done as long as the dimensions are co-prime. Hence, for the signals with co-prime dimensions, the sparse 2D-DFT can be computed using the 1D-FFAST algorithm of \cite{Pawar:2013da}, along with the forward and reverse mapping from \cite{Good:1958ix,Thomas:1963ux}.

\section{FFAST architecture for general 2D signals}\label{sec:extension}

While a sparse 2D-DFT with co-prime dimensions can be computed using the 1D-FFAST, co-prime dimensions can be restrictive and exclude many signals, such as square images. In the following, we describe a 2D-FFAST architecture that can compute a sparse 2D-DFT for a more general class of dimensions as described in Section~\ref{sec:model}.

At a high level, the 2D-FFAST architecture consists of a deterministic sub-sampling ``front-end" and an associated ``back-end" peeling-decoder algorithm as shown in Fig~\ref{fig:2dffast}. Throughout the section, we will use a simple example to illustrate the 2D-FFAST framework.

\begin{figure}[!ht]
\begin{center}
\includegraphics[width=0.8\linewidth]{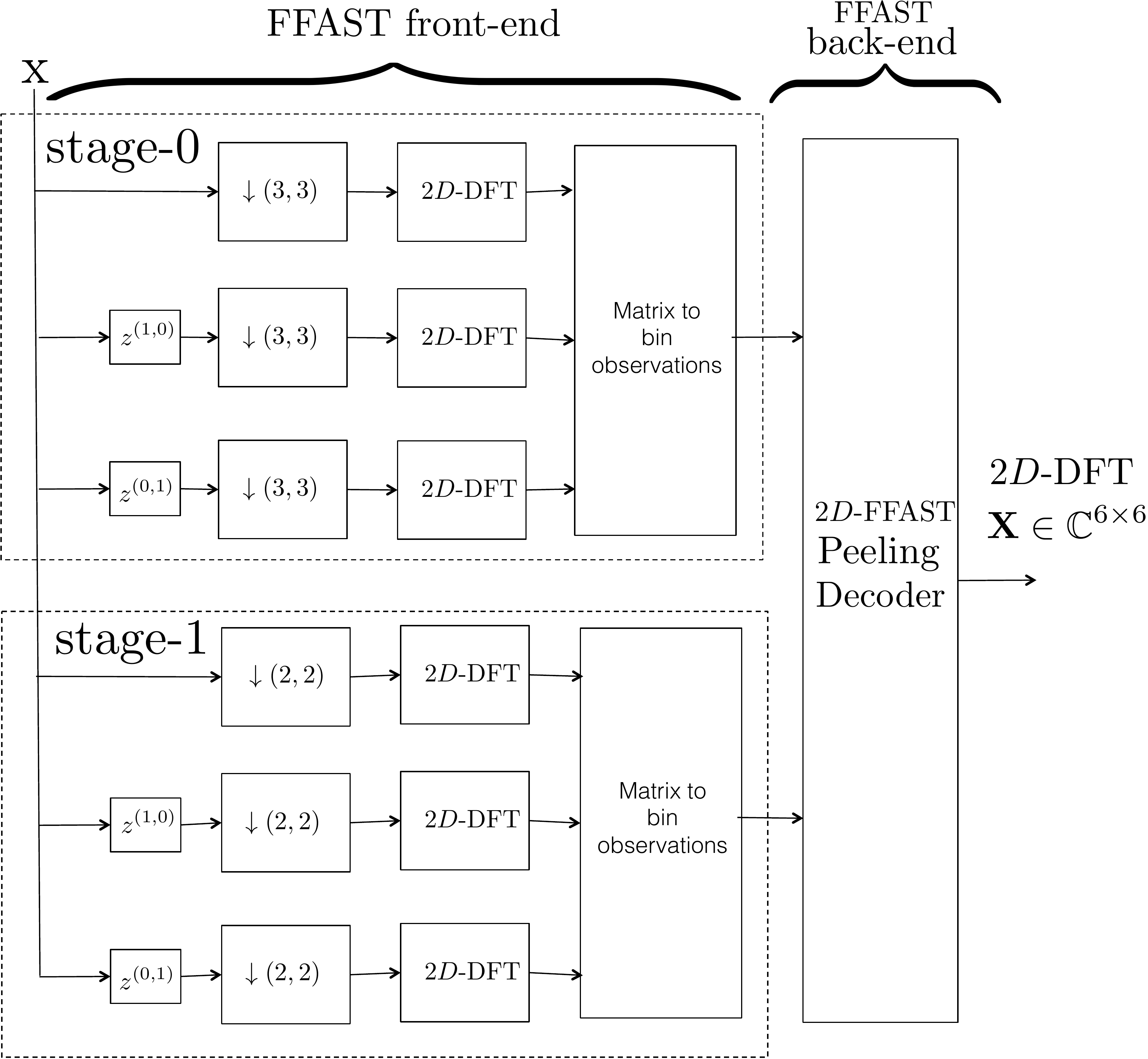}
\caption{A 2D-FFAST architecture of $\stages = 2$ stages. Each stage further has $3$ delay chains and a common sub-sampling factor. The smaller 2D-DFTs are computed of the output of each sub-sampler. Then, bin-observations are formed by collecting one scalar output from each of the $3$ delay chains. The bin-observations are further processed by a `peeling-decoder' to reconstruct the large 2D-DFT $\mtransform$.}
\label{fig:2dffast}
\end{center}
\end{figure}

We choose our example signal to be a $6 \times 6$ 2D signal $\msignal$, such that its 2D-DFT $\mtransform$ is 4-sparse. Let the 4 non-zero 2D-DFT coefficients of $\msignal$ be $\mtransformc{1}{3} = 7, \mtransformc{2}{0} = 3, \mtransformc{2}{3} = 5$ and $\mtransformc{4}{0} = 1$. The 2D-FFAST `front-end' sub-samples the input signal and its circularly shifted version through multiple stages $\stages$. Each stage, further has $\delays = 3$ delay paths and is parametrized by a single sampling factor. For example, the 2D-FFAST architecture of Fig.~\ref{fig:2dffast}, has $\stages = 2$ stages and $3$ delay (circular shift) paths per stage. Let $(n_{x,i}, n_{y,i})$ denote the sampling factor of stage $i$, e.g., in Fig.~\ref{fig:2dffast}, $n_{x,0}=n_{y,0}=3$ and $n_{x,1}=n_{y,1}=2$. Stage $i$ of the 2D-FFAST front-end samples the input signal $\msignal$ and its $3$ circular shifts, $\msignal^{(0,0)}, \msignal^{(1,0)}, \msignal^{(0,1)}$, by $(n_{x,i},n_{y,i})$, in each of the two dimensions.
The 2D-FFAST algorithm synthesizes the big 2D-DFT $\mtransform$, from the short 2D-DFT of each of the sub-sampled stream, using the peeling-decoder. In Section~\ref{sec:binobs}, we describe in detail, the operation of the block {\em matrix-to-bin observations}, which essentially groups the output of the short DFTs in a specific way. 

Before we describe how to compute the 2D-DFT of the signal $\msignal$, using the 2D-FFAST framework, we review some basic signal processing properties of subsampling-aliasing and circular shifts. In Fig.~\ref{fig:2dstage1}, we show a detailed view of stage $0$ of the 2D-FFAST architecture of Fig.~\ref{fig:2dffast}.
\begin{itemize}
\item \textbf{Sub-sampling and aliasing:} If a 2D signal is subsampled in the spatial domain, its 2D-DFT coefficients mix together, that is, alias, in a pattern that depends on the sampling procedure. For example, consider uniform subsampling of $\msignal$ by 3 in both the dimensions. The sub-sampling operation in the first path or delay chain of Fig.~\ref{fig:2dstage1}, results in $\msignal_s =
\left(
\begin{array}{ccc}
  \msignalc{0}{0} &   \msignalc{0}{3}\\
  \msignalc{3}{0}&   \msignalc{3}{3}\\
\end{array}
\right)$. Then, the 2D-DFT coefficients of $\msignal_s$ are related to the 2D-DFT coefficients $\mtransform$ as:

\begin{eqnarray*}
 X_s[0][0] &=& \mtransformc{0}{0} + { \mtransformc{2}{0} } + { \mtransformc{4}{0}} + \mtransformc{0}{2} + \mtransformc{2}{2}\\
 &&+ \ \mtransformc{4}{2} + \mtransformc{0}{4} + \mtransformc{2}{4} + \mtransformc{4}{4} = 4\\
 X_s[0][1] &=& \mtransformc{0}{1} + \mtransformc{2}{1} + \mtransformc{4}{1} + \mtransformc{0}{3} + {\mtransformc{2}{3}}\\
 &&+ \ \mtransformc{4}{3} + \mtransformc{0}{5} + \mtransformc{2}{5} + \mtransformc{4}{5} = 5\\
  X_s[1][0] &=& \mtransformc{1}{0} + \mtransformc{3}{0} + \mtransformc{5}{0} + \mtransformc{1}{2} + \mtransformc{3}{2}\\
 &&+ \ \mtransformc{5}{2} + \mtransformc{1}{4} + \mtransformc{3}{4} + \mtransformc{5}{4} = 0\\
 X_s[1][1] &=&\mtransformc{1}{1} + \mtransformc{3}{1} + \mtransformc{5}{1} + {\mtransformc{1}{3}} + \mtransformc{3}{3}\\
 &&+ \ \mtransformc{5}{3} + \mtransformc{1}{5} + \mtransformc{3}{5} + \mtransformc{5}{5} =  7
\end{eqnarray*}

More generally, if the sampling periods of the row and columns of the 2D-signal $\msignal$ are $n_0$ and $n_1$ (we assume that $n_0$ divides $N_x$ and $n_1$ divides $N_y$) respectively, then,
\begin{equation}
	X_s[i][j] = \sum_{ i\equiv(a)_{N_x / n_0}, j\equiv(b)_{N_y / n_1} } \mtransformc{a}{b}
  \end{equation}
  where notation $i\equiv(a)_{N_x / n_0}$, denotes $i \equiv a$ mod  $(N_x / n_0)$.
  
\item \textbf{Circular spatial shift:} A circular shift in the spatial domain results in a phase shift in the frequency domain. Consider a circularly shifted signal $\msignal^{(1,0)}$ (each column is circularly shifted by $1$) obtained from $\msignal$ as $x^{(1,0)}[i][j] = \msignalc{(i+1)_{N_x}}{j}$. Then the 2D-DFT  coefficients of the shifted signal are given as, $ X^{(1,0)}[i][j] = \omega_{N_x}^i \mtransformc{i}{j}$, where $\omega_{N_x} = \exp( 2\pi \imath  / N_x )$. Similarly, the 2D-DFT  coefficients of a circularly shifted sequence $\msignal^{(0,1)}$ obtained from $\msignal$ as $x^{(0,1)}[i][j] = \msignalc{i}{(j+1)_{N_y}}$, are $ X^{(0,1)}[i][j] = \omega_{N_y}^j \mtransformc{i}{j}$, where $\omega_{N_y} = \exp( 2\pi \imath  / N_y )$. In general, a circular shift of $(s_1,s_2)$ results in $ X^{(s_1,s_2)}[i][j] = \omega_{N_x}^{is_1} \omega_{N_y}^{js_2} \mtransformc{i}{j}$.
\end{itemize}

\begin{figure}[h]
\includegraphics[width=\linewidth]{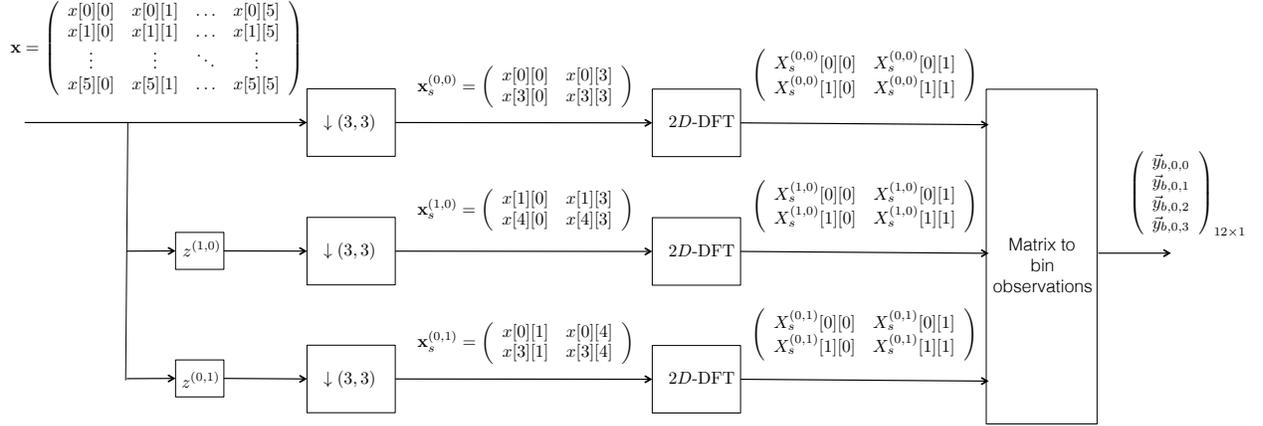}
\caption{A detailed view of stage $0$ of the 2D-FFAST architecture of Fig.~\ref{fig:2dffast}. A 2D signal $\msignal$ and its 2D circularly shifted versions are subsampled by $(3,3)$ to obtain the sub-sampled signals $\msignal^{0,0}_s,\msignal^{1,0}_s$ and $\msignal^{0,1}_s$. Then, the 2D-DFT of the sub-sampled signal is computed and the outputs are collected to form bin-observations.}
\label{fig:2dstage1}
\end{figure}

Using the above signal processing properties of sub-sampling and spatial circular shift, we compute the relation between the 2D-DFT coefficients $\mtransform$ and the output of stage $0$ (shown in Fig.~\ref{fig:2dstage1}) of the 2D-FFAST front-end. Next, we group the output of the 2D-FFAST front end into ``bin-observation" as follows:
\subsubsection{\textbf{Bin observation}}\label{sec:binobs} A bin observation is a $3$-dimensional vector formed by collecting one scalar output value from each of the $3$ delay chains in a stage of the 2D-FFAST front-end. For example, $\binobsv{0}{0}$ is an observation vector of bin $0$ in stage $0$ of the 2D-FFAST front-end and is given by,
\begin{equation}\label{eq:bin0}
\binobsv{0}{0} = 
\left(
\begin{array}{c}
  X^{(0,0)}_{s}[0][0]\\
  X^{(1,0)}_{s}[0][0]\\
  X^{(0,1)}_{s}[0][0]
  \end{array}
\right).
\end{equation}
Note, that there are total of $4$ bins in stage $0$ shown in Fig.~\ref{fig:2dstage1}. The bins are indexed by a single number as follows: a bin formed by collecting the scalar outputs indexed by $(i,j)$ from each of the delay chains is labelled $(N_x/n_0)\times i + j$, e.g., all the delay chain outputs indexed by $(1,0)$ form the observation vector of bin $2$ of stage $0$ (also see Fig.~\ref{fig:sparsegraph1}), and is denoted by
\begin{equation*}
\binobsv{0}{2} = 
\left(
\begin{array}{c}
  X^{(0,0)}_{s}[1][0]\\
  X^{(1,0)}_{s}[1][0]\\
  X^{(0,1)}_{s}[1][0]
  \end{array}
\right).
\end{equation*}

Next, using the above $6\times 6$ example signal $\msignal$, we explain how to compute a sparse 2D-DFT of a signal using decoding over an appropriately designed sparse graph. 

\subsubsection{\textbf{Computing a sparse 2D-DFT via decoding on a sparse-graph}}
\begin{figure}[!ht]
\begin{center}
\includegraphics[width=0.6\linewidth]{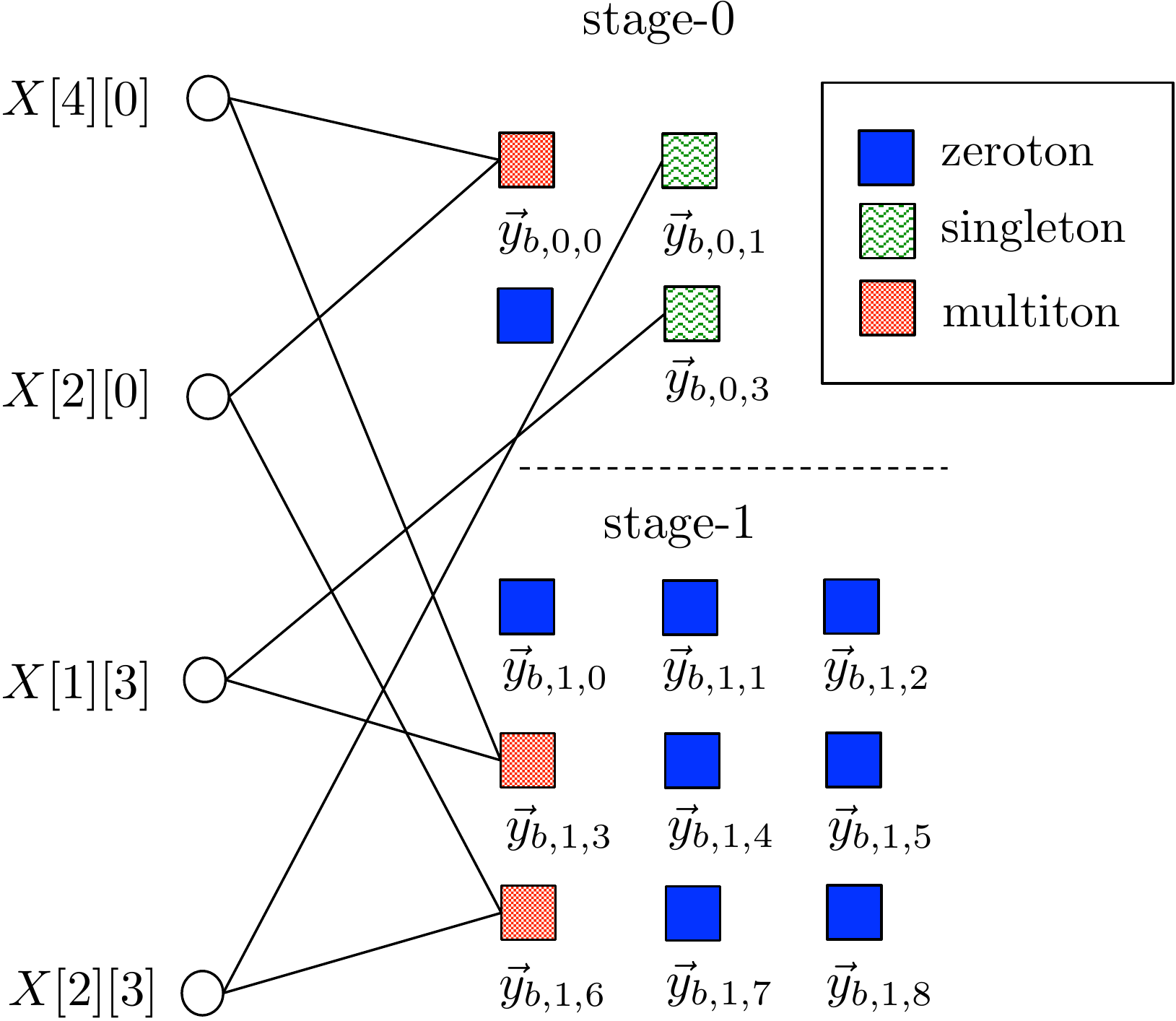}
\caption{A bi-partite graph representing the relation between the bin-observations and the non-zero 2D-DFT coefficients $\mtransform$. Left nodes in the graph represent the non-zero 2D-DFT  coefficients and the right nodes represent the ``bins" (we sometime refer them also as check nodes) with vector observations. An edge connects a left node to a right check node iff the corresponding non-zero 2D-DFT  coefficient contributes to the observation vector at that particular check node, e.g., the 2D-DFT coefficient $\mtransformc{4}{0}$ contributes to the observation vector of bin $0$ of stage $0$ and bin $3$ of stage $1$. A zeroton check node has no left neighbor. A singleton check node has exactly one left neighbor and a multiton check node has more than one left neighbors in the graph.}
\label{fig:sparsegraph1}
\end{center}
\end{figure}

Let the $6\times 6$ example signal $\msignal$ be processed through the $2$-stage 2D-FFAST architecture of Fig.~\ref{fig:2dffast}. Then, the relation between the resulting bin-observations and the non-zero 2D-DFT coefficients $\mtransform$ can be computed using the sub-sampling and the circular shift properties explained earlier and is graphically represented in Fig.~\ref{fig:sparsegraph1}. Left nodes of the graph in Fig.~\ref{fig:sparsegraph1} represent the non-zero DFT coefficients and the right nodes represent the ``bins" (check nodes) with vector observations. An edge connects a left node to a right check node iff the corresponding non-zero 2D-DFT  coefficient contributes to the observation vector of that particular check node, e.g., after aliasing, due to sub-sampling, the 2D-DFT coefficient $\mtransformc{4}{0}$ contributes to the observation vector of bin $0$ of stage $0$ and bin $3$ of stage $1$.

We define the following:
\begin{itemize}
\item {\bf zero-ton}: A bin that has no contribution from any of the non-zero DFT coefficients of the signal, e.g., bin $2$ of stage $0$ or bin $0$ of stage $1$ in Fig.~\ref{fig:sparsegraph1}. A zero-ton bin can be trivially identified from its observations.
\item {\bf single-ton}: A bin that has contribution from exactly one non-zero DFT coefficient of the signal, e.g., bin $1$ of stage $0$. Using the signal processing properties the observation vector of bin $1$ of stage $0$ is given as,
\begin{equation*}
\binobsv{0}{1} = 
\left(
\begin{array}{c}
  X[2][3]\\
  e^{2\pi\imath2/6}X[2][3]\\
  e^{2\pi\imath3/6}X[2][3]
  \end{array}
\right).
\end{equation*}
The observation of a singleton bin can be used to determine the 2D support and the value, of the only non-zero DFT coefficient contributing to that bin, as follows:
\begin{itemize}
\item {\em Row support}: The row-support of the non-zero DFT coefficient contributing to a singleton bin can be computed as,
\begin{equation}
2 = \frac{6}{2\pi}\angle y_{b,0,1}[1]y^{\dagger}_{b,0,1}[0]
\end{equation}
\item {\em Column support}: The column-support of the non-zero DFT coefficient contributing to a singleton bin can be computed as,
\begin{equation}
3 = \frac{6}{2\pi}\angle y_{b,0,1}[2]y^{\dagger}_{b,0,1}[0]
\end{equation}
\item {\em Value}: The value of the non-zero DFT coefficient is given by the observation $y_{b,0,1}[0]$.
\end{itemize}
We refer to this procedure as a ``ratio-test", hence forth. Thus, a simple ratio-test on the observations of a singleton bin correctly identifies the 2D support and the value of the only non-zero DFT coefficient connected to that bin. This property holds for all the singleton bins.

\item {\bf multi-ton}: A bin that has contribution from more than one non-zero DFT coefficients of the signal, e.g., bin $0$ of stage $0$. The observation vector of bin $0$ of stage $0$ is,
\begin{equation*}
\binobsv{0}{0} = 
\left(
\begin{array}{c}
  X[4][0] + X[2][0]\\
  e^{2\pi\imath4/6}X[4][0] +   e^{2\pi\imath2/6}X[2][0]\\
 X[4][0] + X[2][0]
  \end{array}
\right) = \left(
\begin{array}{c}
  4\\
  -2 + \imath \sqrt{3}\\
 4
  \end{array}
\right).
\end{equation*}
Now if we perform the ``ratio-test" on these observations, we get the column support to be $1.3484$. Since we know that the column support has to be an integer value between $0$ to $5$, we conclude that the observations do not correspond to a singleton bin. In \cite{Pawar:2013da}, we rigorously show that the ratio-test identifies a multi ton bin almost surely. 
\end{itemize}

Hence, using the ``ratio-test" on the bin-observations, the 2D-FFAST decoder can determine if a bin is a zero-ton, a single-ton or a multi-ton, almost surely. Also, when the bin is singleton the ratio-test provides the 2D-support as well as the value of the non-zero DFT coefficient connected to that bin. We use the following peeling-decoder on the graph in Fig.~\ref{fig:sparsegraph1}, to compute the support and the values of the non-zero DFT coefficients of $\msignal$.

{\bf Peeling-Decoder:} The peeling-decoder repeats the following steps:
\begin{enumerate}
\item Select all the edges in the graph with right degree $1$ (edges connected to singleton bins).
\item Remove these edges from the graph as well as the associated left and the right nodes.
\item Remove all other edges that were connected to the left nodes removed in step-2. When a neighboring edge of any right node is removed, its contribution is subtracted from that bins observation.
\end{enumerate}
Decoding is successful if, at the end, all the edges have been removed from the graph. It is easy to verify that performing the peeling procedure on the graph of Fig.~\ref{fig:sparsegraph1}, results in successful decoding with the coefficients being uncovered in the following possible order: $\mtransformc{2}{3},  \mtransformc{1}{3}, \mtransformc{4}{0}, \mtransformc{2}{0}$.

Thus, the 2D-FFAST architecture of Fig.~\ref{fig:2dffast}, has transformed the problem of computing a 2D-DFT of a signal $\msignal$ into that of decoding over a sparse graph of Fig.~\ref{fig:sparsegraph1}, induced by the sub-sampling operations. The success of the peeling-decoder depends on the properties of the induced graph. In Appendix~\ref{app:main} we show that, if the sub-sampling parameters are chosen carefully, guided by the Chinese-Remainder-Theorem (CRT), the 2D-FFAST decoder succeeds with probability approaching $1$, asymptotically in $\sparsity$.

\section{Noise Robust 2D-FFAST Algorithm }\label{sec:noisy}

In this section, we provide a brief overview of robust extensions of 2D-FFAST. The robustness against sample noise is achieved by a simple modification to the 2D-FFAST framework. Specifically, the 2D-FFAST front-end in Section~\ref{sec:extension} has multiple stages of sub-sampling operations, where each stage further has 3 delay-chains to estimate the location of the singletons. For noiseless observations, the location estimation for each dimension is done by two separable ratio-tests. However, for noisy observations, the ratio-test becomes unreliable and more measurements are needed for a robust support estimation. Hence, to become robust to sample noise, we modify the front end by increasing the number of delay-chains and modify the back-end by replacing the `ratio-test' with a more sophisticated singleton estimator.

In~\cite{Pawar:2014gx}, we have shown that a noiseless 1D-FFAST framework can be made {\em noise-robust} by using $O(\log^{3} N)$ number of delay-chains per sub-sampling stage, achieving sub-linear computational complexity. Since for 2D-DFT, we follow a {\em separable} approach and perform $2$ independent ratio-tests to determine the 2D support of a single-ton bin, all the results from \cite{Pawar:2014gx} follow. With $O(\log^{3} N)$ specially designed delays, the 2D-R-FFAST algorithm computes a $\sparsity$-sparse 2D-DFT in sub-linear time of $O(\sparsity\log^{4}N)$ using $O(\sparsity\log^{3}N)$ measurements.

\section{Simulations}\label{sec:simulations}
In this section we validate the empirical performance of the 2D-FFAST architecture for a range of settings gradually varying from synthetic signals, generated using the signal model specified in the paper, to real world MR images. In Section \ref{sec:sim_exact}, we provide simulations for synthetic 2D signals with an exactly sparse 2D-DFT and compare it with the theoretical claims of this paper. In Section \ref{sec:sim_clustered}, we evaluate our 2D-FFAST algorithm on synthetically generated `Cal image' and `house image' (see Fig. \ref{fig:clustered}) that although are exactly sparse, do not have uniformly random support for the non-zero DFT coefficients. Later in Section~VII-C, we showcase the noise robustness of our algorithm using the noisy "cal image" as an input to our algorithm. Lastly, in Section~VII-D, we show an application of the 2D-FFAST architecture to acquire Magnitude Resonance Imaging (MRI) data, which is not exactly sparse and also has a structured, rather than a uniformly random, support for the dominant 2D-DFT coefficients. Thus, providing an empirical evidence that the 2D-FFAST architecture is applicable to realistic signals, although the theoretical claims are only for signals with an exactly sparse 2D-DFT with uniformly random support.

In the spirit of reproducible research, we provide a software package to reproduce most of the results described in this paper. The software package can be downloaded from:
\begin{center}
\url{https://www.eecs.berkeley.edu/~kannanr/project_ffft.html}
\end{center}

\subsection{Application of 2D-FFAST for signals with exactly $\sparsity$-sparse 2D-DFT}\label{sec:sim_exact}

\begin{figure}[!ht]
\begin{center}
\includegraphics[width=\linewidth]{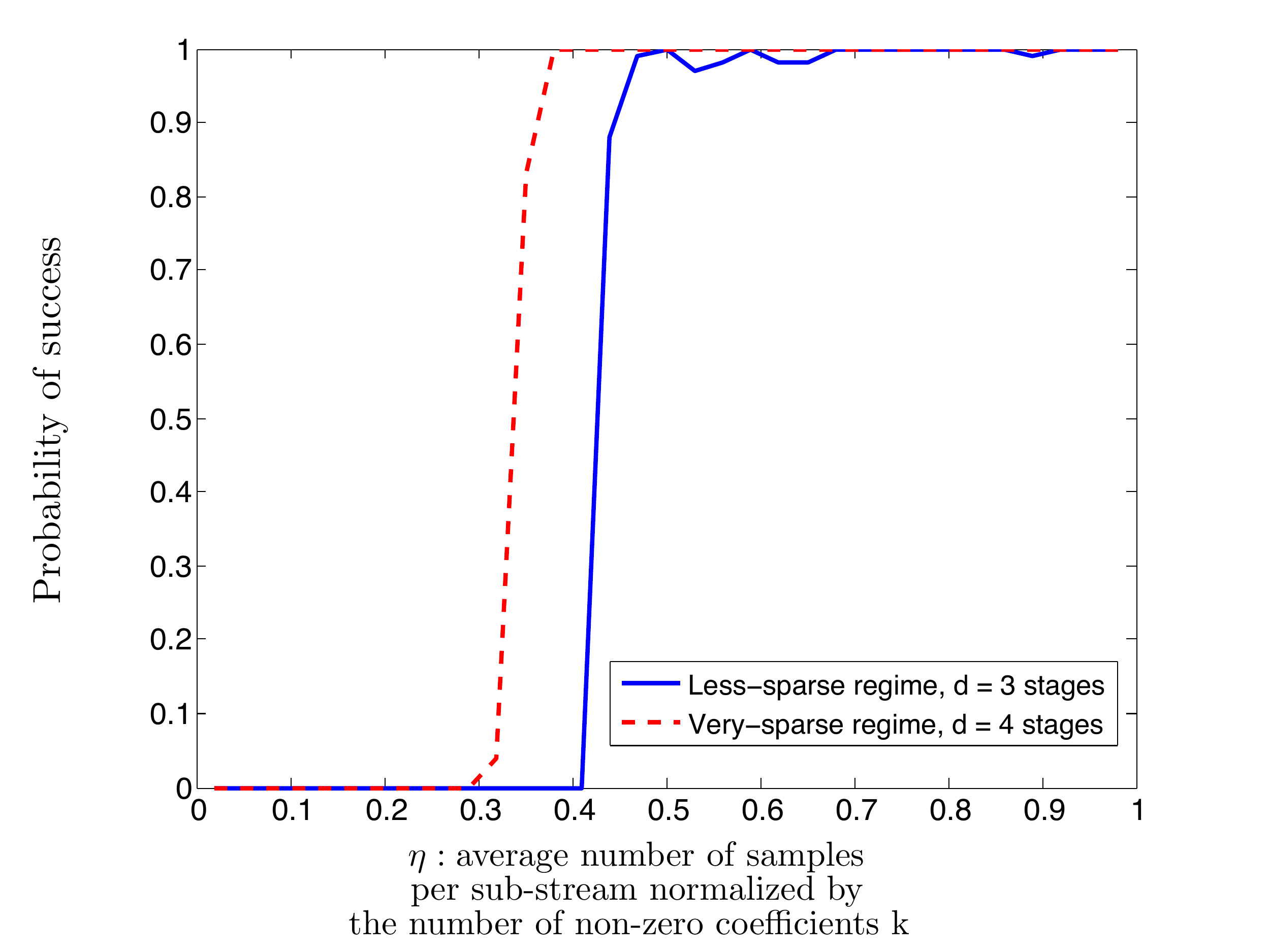}
\caption{The probability of success of the 2D-FFAST algorithm as a function of $\eta$, the average number of bins per stage normalized by the number of non-zero coefficients $\sparsity$. The plot is obtained for the two different sparsity regimes: 1) Very-sparse regime, that is, $\sindex \leq 1/3$. For this regime, a $\stages = 4$ stage 2D-FFAST architecture is used; 2) Less-sparse regime, that is, $1/3 < \sindex < 1$. For this regime, a $\stages = 3$ stage 2D-FFAST architecture is used. Each point in the plot is obtained by averaging over 100 runs of the simulations. The ambient signal dimension $N_x\times N_y$, and the number of measurements $\samples$ are fixed in both the simulations, while the number of the non-zero 2D-DFT coefficients $\sparsity$, is varied to get different values of $\eta$. We note that the 2D-FFAST algorithm exhibits a threshold behavior with the thresholds $\eta_1 = 0.38$ and $\eta_2 = 0.47$ for the very-sparse and the less-sparse regimes respectively. From \cite{Pawar:2013da}, we know that the optimal thresholds for a $\stages = 4$ and $\stages = 3$ stage 1D-FFAST architecture are $\eta^*_1 = 0.3237$ and $\eta^*_2 = 0.4073$ respectively. Thus, confirming that the empirical behavior of the 2D-FFAST architecture is in close agreement with the theoretical results.}
\label{fig:ksparse}
\end{center}
\end{figure}

Figure \ref{fig:ksparse} shows a probability of success plot (phase transition curve) generated from simulations. The plot demonstrates the number of samples required to perfectly reconstructed an exactly $\sparsity$-sparse 2D-DFT in a particular setting and is in close agreement with the theoretical results as described in \cite{Pawar:2013da}. The details of the simulation are as follows:

\begin{itemize}

\item \textbf{Very sparse regime $0 < \sindex \leq 1/3$:} We construct a 2D signal $\msignal$ of ambient dimensions $N_x = N_y = 2520$. The sparsity parameter $\sparsity$ is varied from $70$ to $170$ which corresponds to the very-sparse regime of $\sparsity < O(N^{1/3})$, where $N = N_xN_y = 6.35$ million. We use a $4$ stage 2D-FFAST architecture (see Fig.~\ref{fig:2dffast} for reference). The uniform sampling periods for each of the 4 stages are $(5 \times 7 \times 8)^2$, $(7 \times 8 \times 9)^2$, $(8 \times 9 \times 5)^2$ and $(9 \times 5 \times 7)^2$ respectively. Thus the number of bins/check-nodes in the $4$ stages are $9^2$, $5^2$, $7^2$ and $8^2$ respectively. Further, each stage has $3$ delay sub-streams. Thus, the total number of measurements used by the 2D-FFAST algorithm for this simulation is $\samples < 3(9^2 + 5^2 + 7^2+ 8^2) = 657$. We define $\eta$ as an average number of bins per stage normalized by the sparsity $\sparsity$. For example, when $\sparsity = 100$ the value of $\eta = 0.5475$. In \cite{Pawar:2013da}, we have shown that for a $4$-stage 1D-FFAST architecture the theoretical threshold $\eta^* = 0.3237$. We observe that the 2D-FFAST algorithm also shows a threshold behavior and successfully reconstructs the 2D-DFT for all values of $\eta > 0.38$, which is in close agreement with the theoretical claims.

\item \textbf{Less sparse regime $1/3 < \sindex < 1$:} We construct a 2D signal $\msignal$ of ambient dimensions $N_x = N_y = 280$. The sparsity parameter $\sparsity$ is varied from $2000$ to $5000$ which corresponds to the less-sparse regime, $O(N^{2/3}) < \sparsity < O(N^{3/4})$, where $N = N_xN_y = 78400$. We use a $\stages = 3$ stage 2D-FFAST architecture (see Fig.~\ref{fig:2dffast} for reference). The uniform sampling periods for each of the 3 stages are $5,8$ and $7$ respectively. Thus the number of bins/check-nodes in the $3$ stages are $35^2$, $56^2$ and $40^2$ respectively. Further, each stage has $3$ delay sub-streams. Thus the total number of measurements used by the 2D-FFAST algorithm for this simulation is $\samples < 3(35^2 + 56^2 + 40^2) = 17883$. From \cite{Pawar:2013da}, for a $3$ stage 1D-FFAST architecture, the theoretical threshold $\eta^* = 0.4073$. We observe that the 2D-FFAST algorithm successfully reconstructs the 2D-DFT for all values of $\eta > 0.47$, which is in close agreement with the theoretical thresholds.

\item For each run of the simulation, a 2D-DFT $\mtransform$ of dimension $N_x \times N_y$ is generated, with exactly $\sparsity$ non-zero coefficients. The support of the non-zero DFT coefficients is chosen uniformly at random from the set $ \left\{0,1,...,N_x-1 \right\} \times \left\{ 0,1,...,N_y-1 \right\}$. The spatial 2D signal $\msignal$ is then generated from $\mtransform$ using an inverse 2D-DFT operation. This discrete signal $\msignal$ is then given as an input to the 2D-FFAST front-end.

\item Each sample point in Fig.~\ref{fig:ksparse} is generated by averaging over 100 runs of the simulations.

\item Decoding is successful if all the 2D-DFT coefficients are recovered perfectly.

\end{itemize}

\begin{figure}[!ht]
\begin{center}
\includegraphics[width=\linewidth]{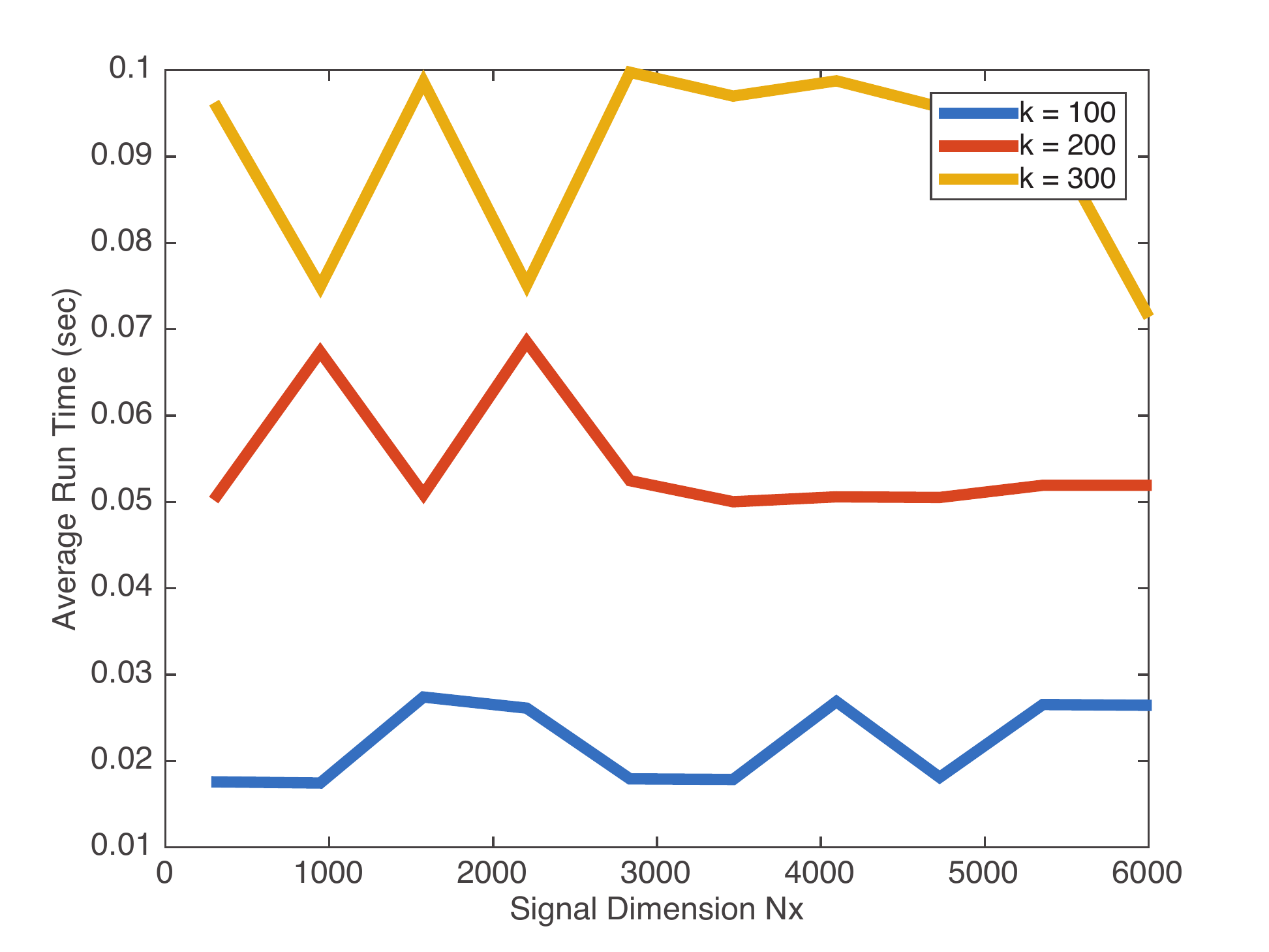}
\caption{Empirical run-time of the 2D-FFAST for different sparsity level $k$ with varied $N_x$ and fixed $N_y$}
\label{fig:time}
\end{center}
\end{figure}

Figure \ref{fig:time} shows the empirical run-time of the 2D-FFAST with $\sparsity = 100, 200, 300$ and ambient dimension $N_x$ varied from $315$ to $5985$ with $N_y$ held fixed at $315$. We used a $\stages = 3$ stage 2D-FFAST architecture with $3$ delays per stage. Each problem instance was run $1000$ times, and the averaged run-time was plotted. As seen from the figure, the empirical run-time remains constant irrespective of the growth in the ambient dimension $Nx$ and only grows with the sparsity $\sparsity$, thus confirming our theoretical claim of computational complexity being independent of ambient dimension N.

\subsection{Application of 2D-FFAST for signals with exactly $\sparsity$-sparse 2D-DFT but with non-uniform support}\label{sec:sim_clustered}
\begin{figure}[!ht]
\centering
\subfigure[Log-intensity plot of the 2D-DFT of the `Cal' image of dimension $280 \times 280$.]{\label{fig:cal_dft}
\includegraphics[width=.31\linewidth]{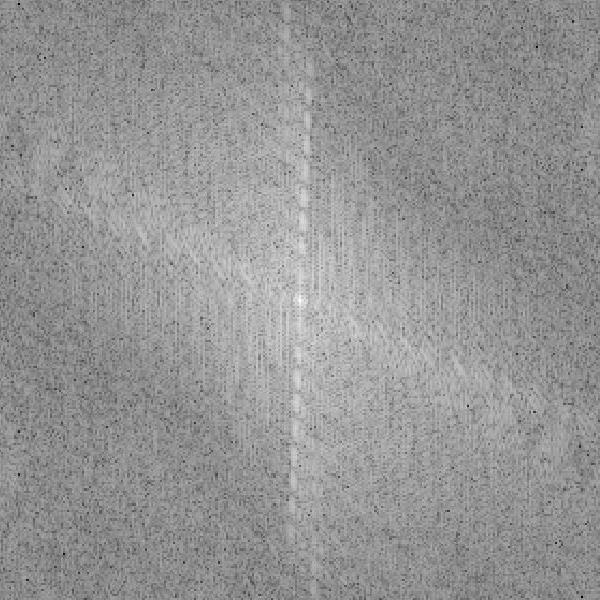}}
\subfigure[2D-FFAST subsampled 2D-DFT of the `Cal' image with $\samples=4.75k=16668$ measurements. The white pixels correspond to the sampled data.]
{\includegraphics[width=.31\linewidth]{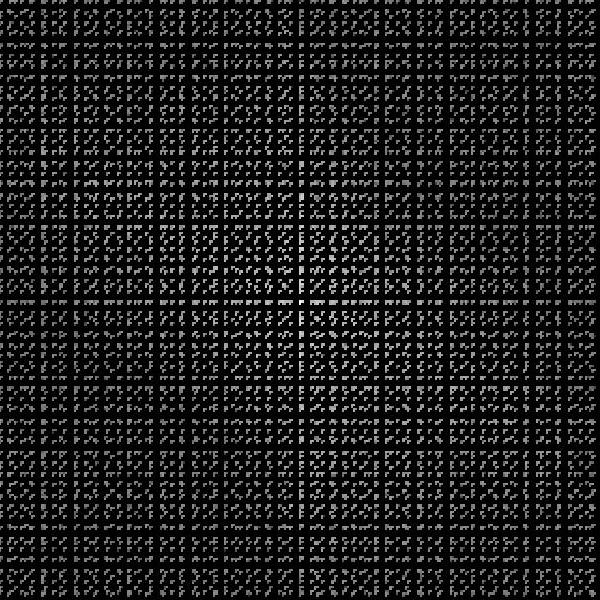}}
\subfigure[Perfectly reconstructed `Cal' image using the 2D-FFAST algorithm]{\label{fig:cal_spatial}\includegraphics[width=.31\linewidth]{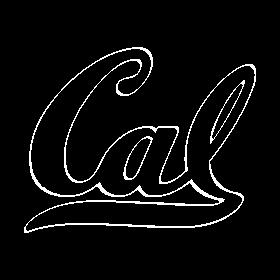}}

\subfigure[Log-intensity plot of the 2D-DFT of the `House' image of dimension$247 \times 238$.]{\label{fig:house_a}\includegraphics[width=.31\linewidth]{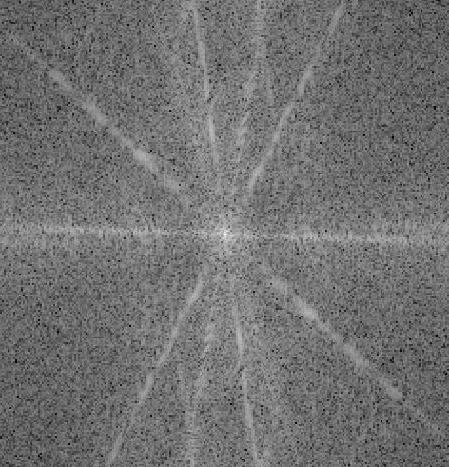}}
\subfigure[1D-FFAST subsampled 2D-DFT of `House' image with $\samples=5.46k=25126$ measurements.]{\label{fig:house_b}\includegraphics[width=.31\linewidth]{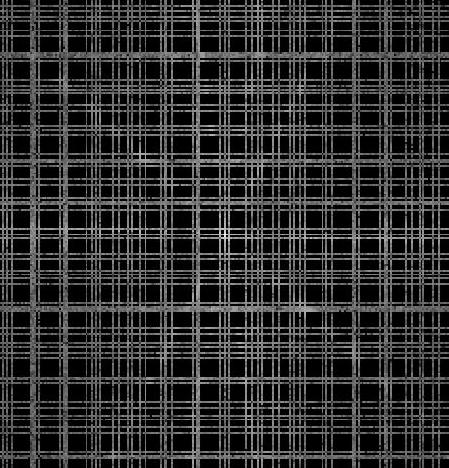}}
\subfigure[Perfectly reconstructed `House' image using prime-factor mapping and the 1D-FFAST algorithm]{\label{fig:house_c}\includegraphics[width=.31\linewidth]{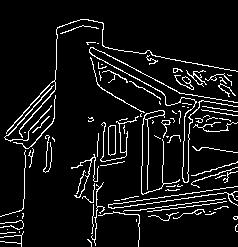}}
\caption{The figure shows the performance of the 2D and 1D FFAST algorithm for signals with exactly-sparse spatial image domain, with non-uniform support of the non-zero pixels. The signal is sampled in the Fourier domain and the 2D-FFAST algorithm is used to perfectly recover the image domain data.}\label{fig:clustered}
\end{figure}
The theoretical guarantees of Theorem~\ref{thm:main} are for signals that have an exactly $\sparsity$-sparse 2D-DFT, with the support of the non-zero DFT coefficients being uniformly random. In this section we empirically show that the proposed 2D-FFAST architecture works quite well even for signals that have non-uniform support of the 2D-DFT coefficients. We provide simulation results for two different types of images with the parameters specified as below.

\begin{itemize}
\item \textbf{ `Cal' Image: } The `Cal' image shown in Fig.~\ref{fig:cal_spatial}, is a synthetic image of size $280\times 280$. The number of the non-zero pixels in the `Cal' image is $\sparsity = 3509$, and the support is non-uniform. Note, since the image is sparse in the spatial domain, we sub-sample in the frequency domain, that is, the input to the 2D-FFAST front-end is the 2D-DFT of the `Cal' image (see Fig.~\ref{fig:cal_dft}). We use a $3$ stage 2D-FFAST architecture with sampling periods $5,7$ and $8$ in each of the $3$ stages respectively. The 2D-FFAST algorithm perfectly reconstructs the spatial image using $\samples = 16668$ measurements of its 2D-DFT domain.

\item \textbf{ `House' Image: } The `House' image shown in Fig.~\ref{fig:house_c} is created by applying the Canny edge detection algorithm on the commonly used House image in the image processing literature. The `House' image is cropped to size $247 \times 238$. The number of non-zero pixels is $\sparsity = 4599$ and the support is non-uniform. Since the dimensions of the image are co-prime, we use the prime factor mapping of Section~\ref{sec:good}, and apply the 1D-FFAST algorithm to perfectly reconstruct the image using $\samples = 5.46 \sparsity = 25126$ measurements of its 2D-DFT domain.

\end{itemize}

\subsection{Application of 2D-R-FFAST for signals with exactly $\sparsity$-sparse 2D-DFT and additive noise}\label{sec:sim_noisy}

\begin{figure}[!ht]
\centering
\subfigure[Noisy `Cal' image with SNR = $13$dB]
{\label{fig:caln_spect}
\includegraphics[width=.31\linewidth]{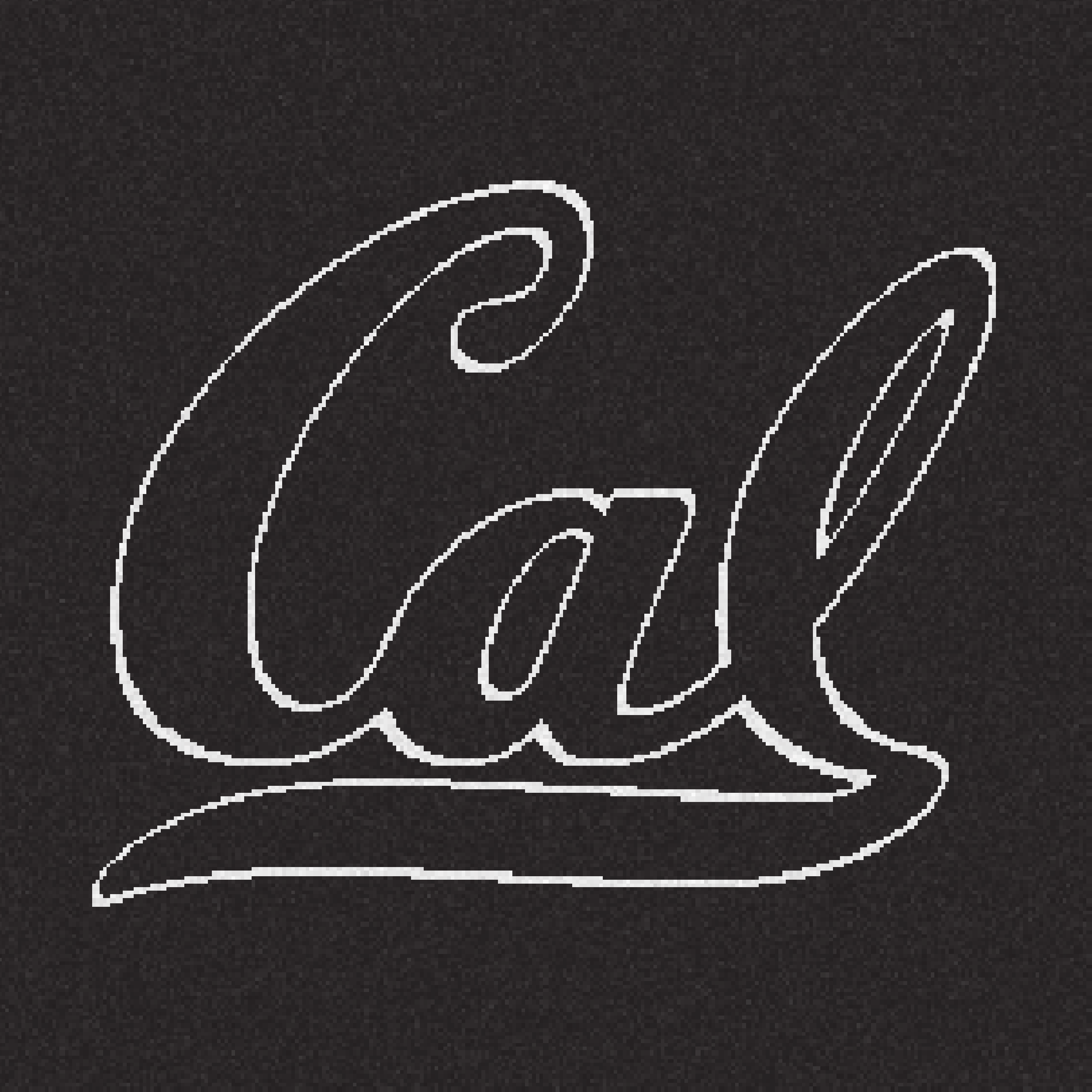}}
\subfigure[2D-FFAST subsampled 2D-DFT of the noisy `Cal' image with $\samples=12.3k=43200$ measurements. ]
{\label{fig:caln_signal}
\includegraphics[width=.31\linewidth]{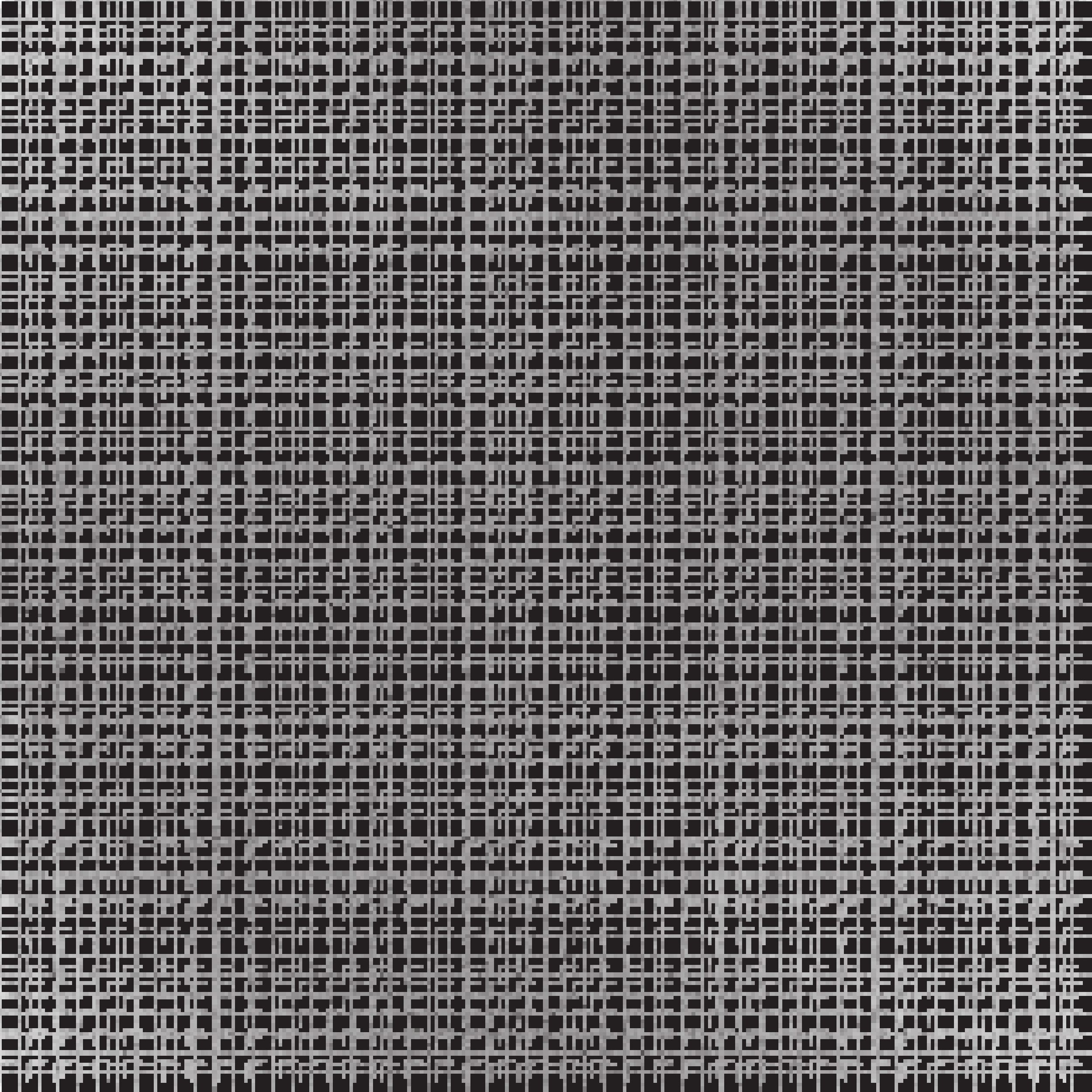}}
\subfigure[Reconstructed `Cal' image using the 2D-FFAST algorithm]
{\label{fig:caln_spect_rec}
\includegraphics[width=.31\linewidth]{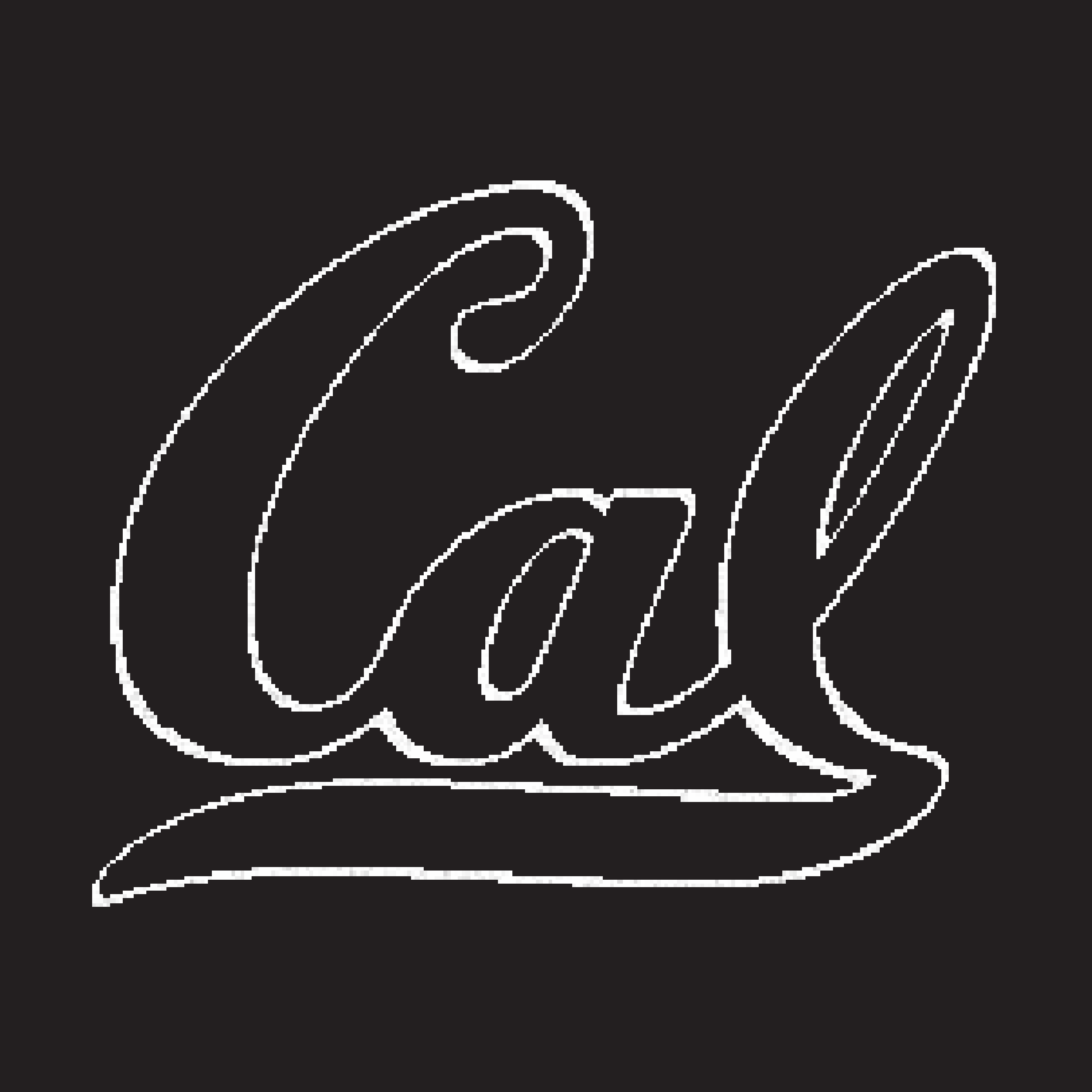}}

\subfigure[Difference between the noisy and original `Cal' images. Amplified by 5 for visualization]
{\label{fig:caln_diff}
\includegraphics[width=.31\linewidth]{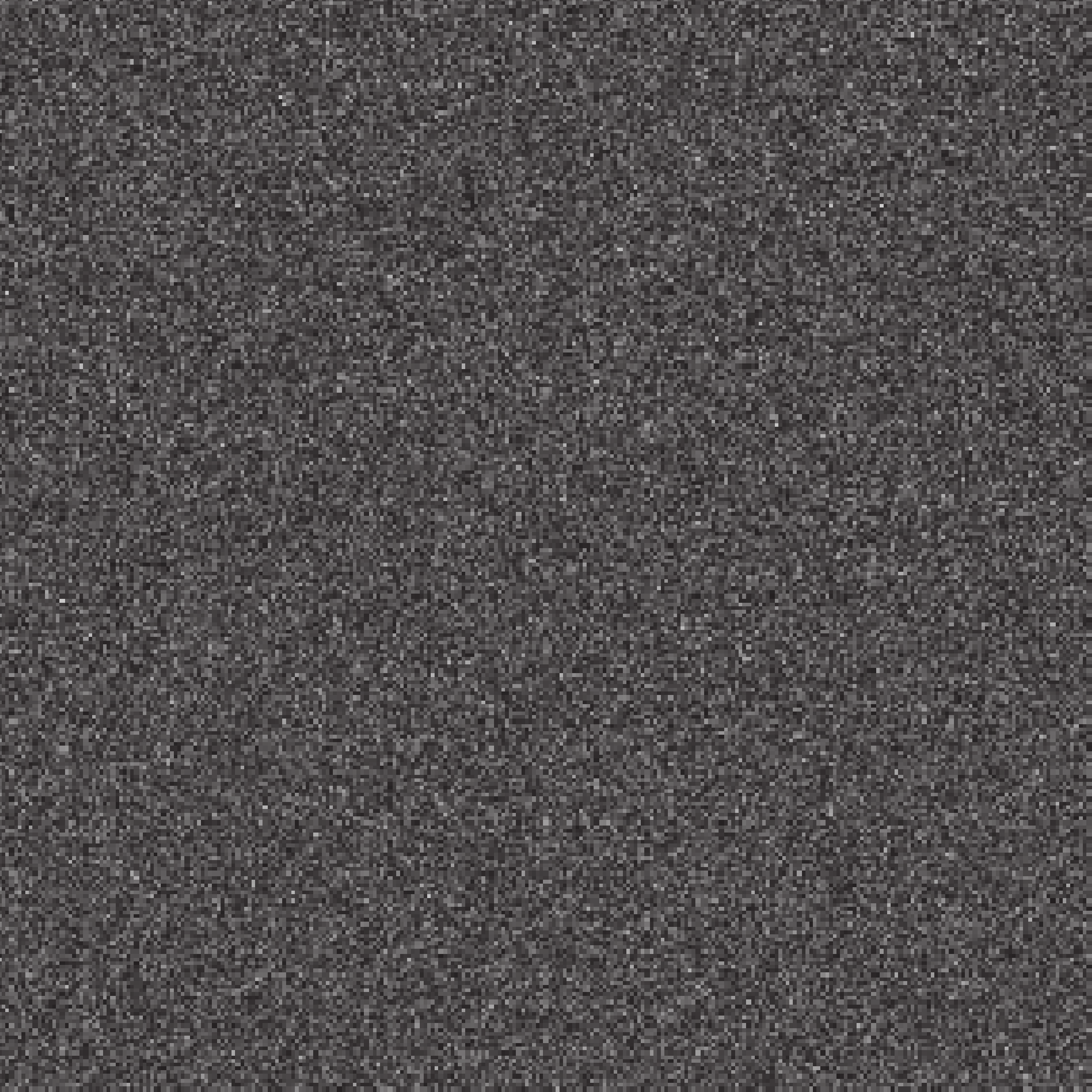}}
\subfigure[Difference between the 2D-FFAST reconstructed and original `Cal' images. Amplified by 5 for visualization. Notice there is zero error for the empty regions.]
{\label{fig:caln_spect_diff}
\includegraphics[width=.31\linewidth]{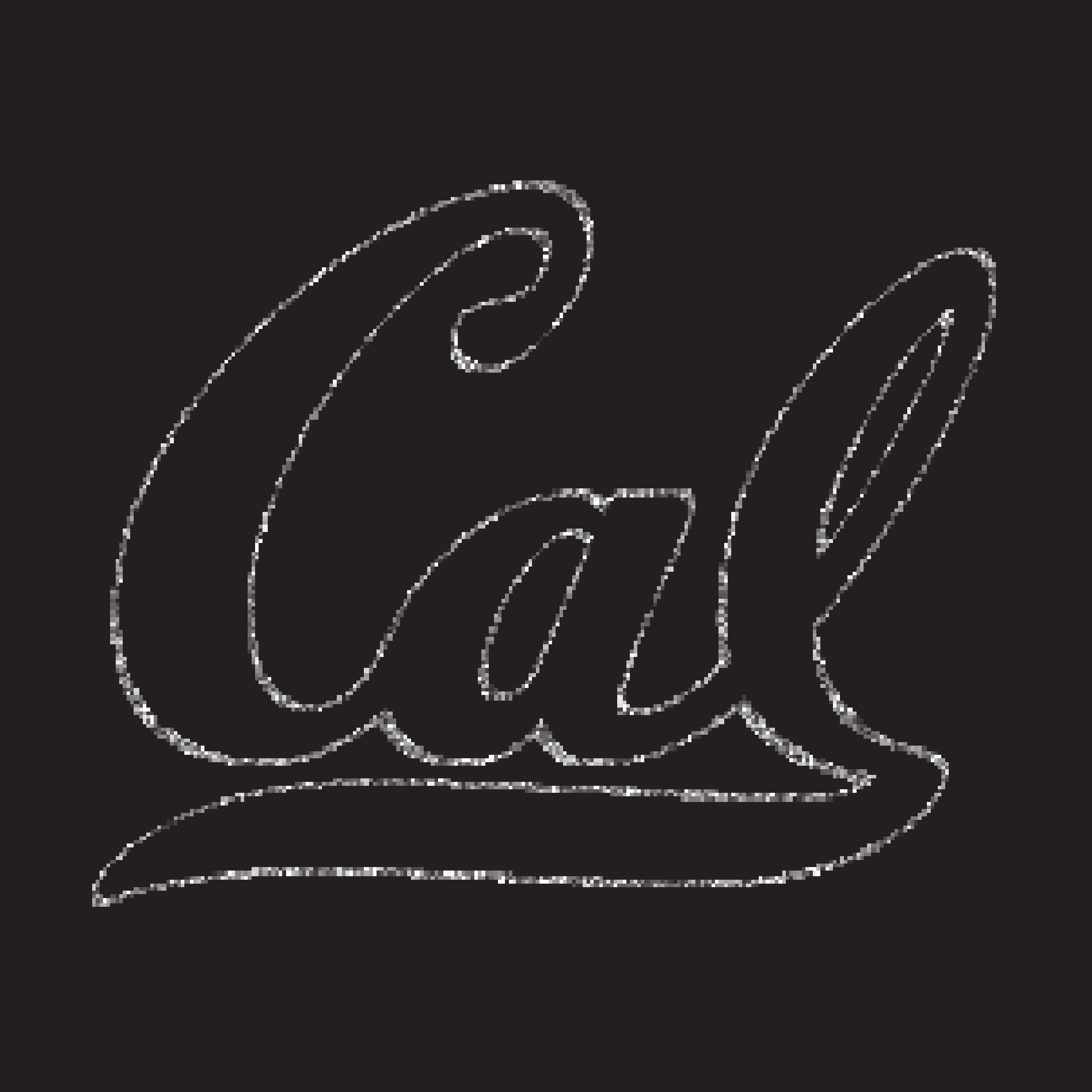}}

\caption{The figure shows the performance of the 2D FFAST algorithm for signals with exactly-sparse spatial image domain, with non-uniform support of the non-zero pixels and additive Gaussian noise.}
\end{figure}

In this section, we demonstrate the application of a noise robust variant of our algorithm 2D-R-FFAST to reconstructing the 'Cal' image from its noisy spatial samples. The 'Cal' image shown in Fig.~\ref{fig:caln_spect} is constructed by adding white Gaussian noise to the `Cal' image such that the SNR is $13$dB. We use a $3$ stage 2D-FFAST architecture with sampling periods $5,7$ and $8$ in each of the $3$ stages respectively. To increase noise robustness to 2D-FFAST, we increase the number of delays to $5$ per dimension. The number of measurements is $43200$. 

Figure \ref{fig:caln_spect_rec} shows the 2D-R-FFAST reconstructed image. The 2D-FFAST algorithm perfectly recovers the sparse coefficient locations in the DFT domain and results in a normalized mean-squared error of 0.0136. Figure \ref{fig:caln_diff} and \ref{fig:caln_spect_diff} show the amplified difference images for the noisy and 2D-FFAST reconstructed images respectively. As seen in the figures, 2D-R-FFAST reduces noise in regions where the pixels are zero in the original image as it utilizes the sparsity nature of the image.

\subsection{Application of the 2D-R-FFAST for MR imaging}\label{sec:sim_brain}
\begin{figure}[!ht]
\centering     
\subfigure[Log intensity plot of the 2D-DFT of the original `Brain' image. The red enclosed region is fully sampled and used for the stable inversion.]{\label{fig:dftfull}\includegraphics[width=.31\linewidth]{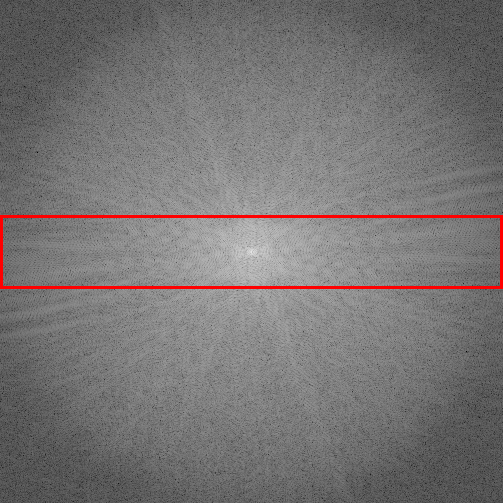}}
\subfigure[Original `Brain' image in spatial domain.]{\label{fig:spatialfull}\includegraphics[width=.31\linewidth]{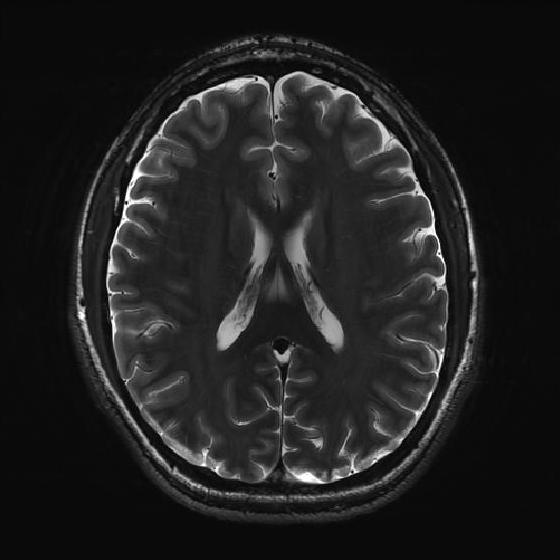}}
\subfigure[Reconstructed `Brain' image using the 2D-R-FFAST architecture along with the fully sampled center frequencies. The total number of Fourier measurements used is $60.18\%$.]{\label{fig:rfull}\includegraphics[width=.31\linewidth]{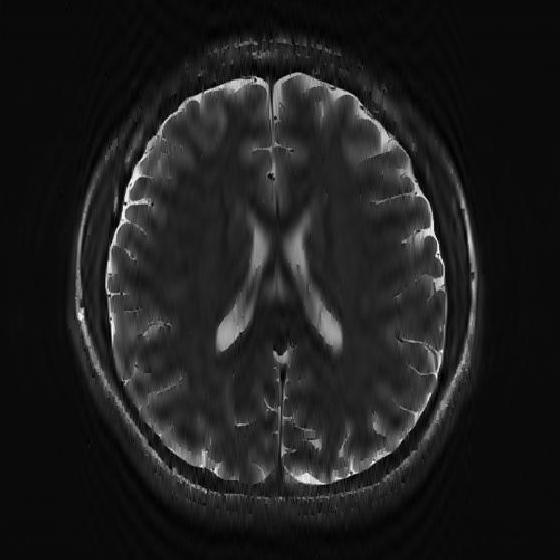}}
\subfigure[Log intensity plot of 2D-DFT of the original `Brain' image, after application of the vertical difference operation.]{\label{fig:dftdiff}\includegraphics[width=.31\linewidth]{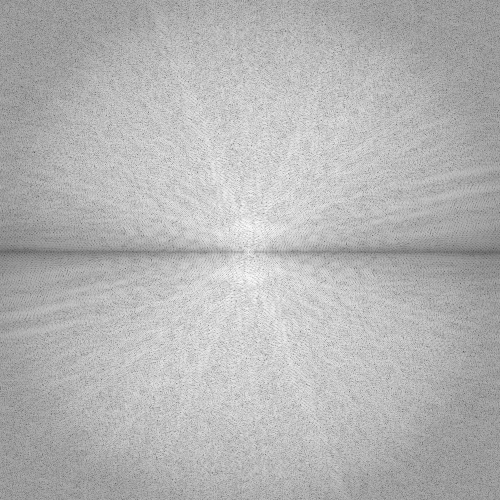}}
\subfigure[Differential `Brain' image obtained using the vertical difference operation on the original `Brain' image.]{\label{fig:spatialdiff}\includegraphics[width=.31\linewidth]{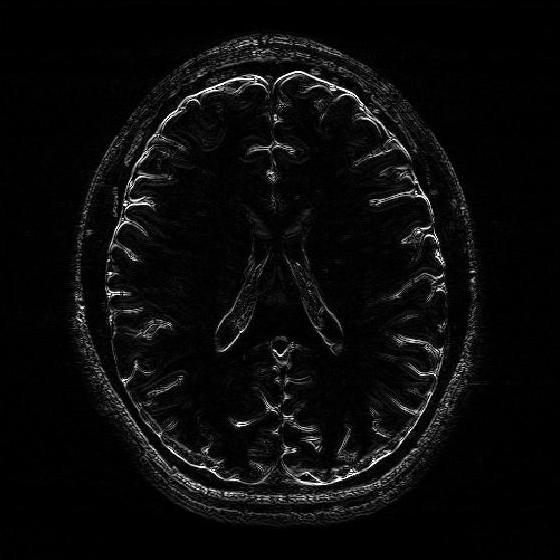}}
\subfigure[Differential `Brain' image reconstructed using the 2D-FFAST algorithm from $56.71\%$ of Fourier measurements.]{\label{fig:rdiff}\includegraphics[width=.31\linewidth]{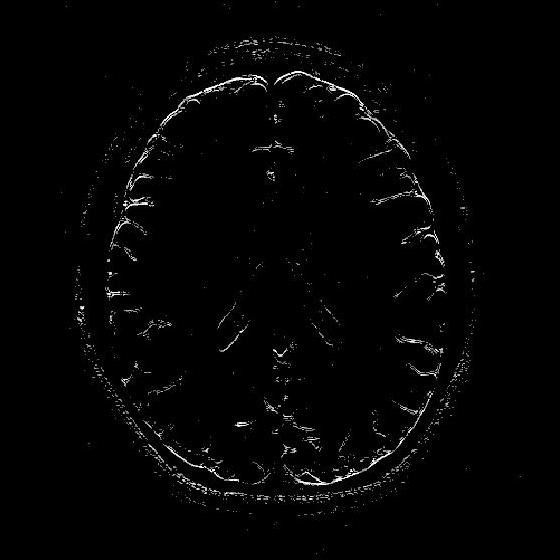}}
\caption{Application of the 2D-FFAST algorithm to reconstruct the `Brain' image acquired on an MR scanner with dimension $504 \times 504$. We first reconstruct the differential `Brain' image shown in Fig.~\ref{fig:spatialdiff}, using $\stages =3$ stage 2D-R-FFAST architecture with $15$ random delays in each of the $3$ stages of the 2D-FFAST architecture. Additionally we acquire all the Fourier measurements from the center frequency as shown, by the red enclosure, in Fig.~\ref{fig:dftfull}. Then, we do a stable inversion using the reconstructed differential `Brain' image of Fig.~\ref{fig:rdiff} and the fully sampled center frequencies of Fig.~\ref{fig:dftfull}, to get a reconstructed full `Brain' image as shown in Fig.~\ref{fig:rfull}. Our proposed two-step acquisition and reconstruction procedure takes overall $60.18\%$ of Fourier measurements.}
\end{figure}
In this section, we apply the 2D-R-FFAST algorithm to reconstruct a brain image acquired on an MR scanner with dimension $504 \times 504$. In MR imaging the measurements are acquired in the Fourier domain and the task is to reconstruct the spatial image from significantly less number of Fourier measurements~\cite{Lustig:2007cu}. To reconstruct the full brain image using 2D-R-FFAST, we perform the following two-step procedure: 
\begin{itemize}
\item {\em Differential space signal acquisition}: We perform a vertical finite difference operation on the image by multiplying the 2D-DFT signal with $1 - e^{ 2 \pi \imath \omega_x}$. This operation effectively creates an approximately sparse differential image, as shown in Fig.~\ref{fig:spatialdiff}, in spatial domain and can be reconstructed using 2D-FFAST. Note, that the finite difference operation can be performed on the sub-sampled data and at no point we access all the input Fourier measurements. The differential brain image is then sub-sampled and reconstructed using a $3$ stage 2D-FFAST architecture. Also, since the brain image is approximately sparse, we take $15$ delay sub-streams in each of the $3$ stages of the 2D-FFAST architecture, instead of $3$ delay sub-streams as in the exactly sparse case. The 2D-R-FFAST algorithm reconstructs the differential brain image using $56.71\%$ of Fourier measurements.
\item {\em Inversion using fully sampled center frequencies}: After reconstructing the differential brain image, as shown in Fig.~\ref{fig:rdiff}, we invert the finite difference operation by dividing the 2D-DFT measurements with $1 - e^{ 2 \pi \imath \omega_x}$. Since the inversion is not stable near the center of the Fourier domain, only the non-center frequencies are inverted. The center region of the 2D-DFT is fully sampled and used in the inversion process.
\item Overall we use a total of $60.18\%$ of Fourier measurements to reconstruct the brain image using the 2D-R-FFAST algorithm along with the fully sampled center frequencies. The resulting signal-to-noise ratio of the reconstructed image is  $4.5173$ dB. While the reconstruction error is not as good as state-of-the-art compressed sensing MRI results, we note that the 2D-R-FFAST has \em both low computational complexity and low sample complexity \em, which none of the state-of-the-art compressed sensing results in MRI can achieve. 
\end{itemize}

\section{Conclusion}\label{sec:conclusion}
We have shown that a 2D sparse DFT with both low sample complexity and low computational complexity can be practically achieved with the 2D-FFAST algorithm. We have made the connection between the 2D sparse DFT problem and the sparse graph decoding problem and provides theoretical basis for our algorithm. Our applications show promise for practical implementation in imaging systems and we believe additional incorporation of domain knowledge can further improve the 2D-FFAST performance.

\begin{appendices}

\section{Proof of Theorem~\ref{thm:main} for 2D-FFAST}\label{app:main}
\begin{figure}[!ht]
\begin{center}
\includegraphics[width=\linewidth]{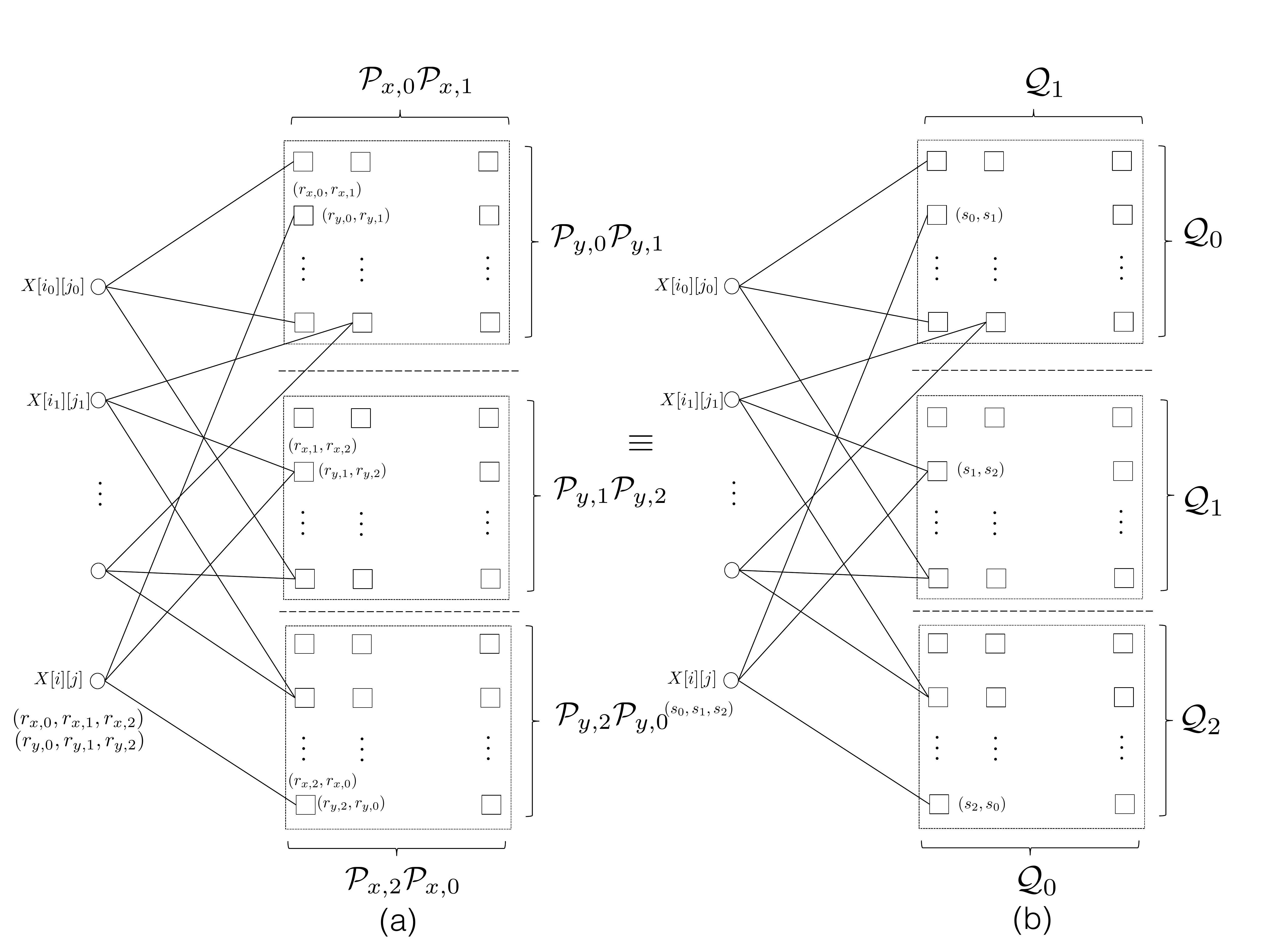}
\caption{A bi-partite graph representation of relation between the non-zero 2D-DFT coefficients, of a 2D signal $\msignal$, and the observations of a $3$-stage 2D-FFAST architecture. (a) A non-zero DFT coefficient with support $(i,j)$ is indexed by a $6$-tuple $((\rr{0},\rr{1},\rr{2}),(\rc{0},\rc{1},\rc{2}))$, where $\rr{\ell} = i \text{ mod } \primefr{\ell}$ and $\rc{\ell} = j \text{ mod } \primefc{\ell}$, for $\ell =0,1,2$. Each check node is indexed by a quadruplet, e.g., check node in stage $0$ is indexed by $((\rr{0},\rr{1}),(\rc{0},\rc{1}))$. A non-zero DFT coefficient with an index $((\rr{0},\rr{1},\rr{2}),(\rc{0},\rc{1},\rc{2}))$ is connected to a check node $((\rr{0},\rr{1}),(\rc{0},\rc{1}))$ in stage $0$. (b) A non-zero DFT coefficient with support $(i,j)$ is indexed by a triplet $(\sr{0},\sr{1},\sr{2})$, where $\sr{\ell} = \rr{\ell}\primef{\ell} + \rc{\ell}$, for $\ell = 0,1,2$. Each check node is indexed by a doublet, e.g., check node in stage $0$ is indexed by $(\sr{0},\sr{1})$. A non-zero DFT coefficient with an index $(\sr{0},\sr{1},\sr{2})$ is connected to a check node $(\sr{0},\sr{1})$ in stage $0$.}
\label{fig:sparsegraph2}
\end{center}
\end{figure}

In this section we provide a proof of Theorem~\ref{thm:main}. In \cite{Pawar:2013da}, we have shown that for 1D signals, for all values of the sparsity index $0 < \sindex < 1$, and sufficiently large $(k,N)$, where $\sparsity = O(N^{\sindex})$, there exists a 1D-FFAST architecture, that computes a $\sparsity$-sparse 1D-DFT $\transform$, using $O(\sparsity)$ sample in $O(\sparsity\log\sparsity)$ computations. The 2D-FFAST algorithm succeeds with probability approaching $1$, asymptotically in the number of measurements. In order to show a similar result for the 2D signals, we use the following approach:

We show that, for any given sparsity index $\sindex$ and sufficiently large $\sparsity, N=N_x \times N_y$, there exists a mapping between a 2D-FFAST architecture and a 1D-FFAST architecture designed for parameters $(\sindex,\sparsity,N)$. In particular, we show that a 2D-FFAST has a bi-partite graph representation corresponding to a bi-partite graph representation for a 1D-FFAST. The proof then follows from the results in \cite{Pawar:2013da}. For the sake of brevity we show the mapping for $\sindex = 2/3$. The mapping extends in a straightforward way for any $0 < \sindex < 1$.

%
%
%
%
%
%

Let $\{ \pprimef{0},  \pprimef{1},  \pprimef{2} \}$ be a set of pairwise co-prime integers such that $\pprimef{i} = \binsize + O(1) $ and let $\{\primefr{0},\primefr{1},\primefr{2}\}$ and $\{\primefc{0},\primefc{1},\primefc{2}\}$ be two sets of co-prime integers such that $\primefr{i}\primefc{i} = \pprimef{i}$. Also, let $N_x = \primefr{0}\primefr{1}\primefr{2}$, $N_y = \primefc{0}\primefc{1}\primefc{2}$, and $\sparsity = \primefr{0}\primefr{1}\primefc{0}\primefc{1}$. Then, $\sparsity = O(N^{2/3})$, where $N = N_xN_y$, that is, $\sindex = 2/3$. Consider a $3$-stage 2D-FFAST architecture with the sub-sampling factors $(\subr{0}, \subc{0}) = (\primefr{2}, \primefc{2})$, $(\subr{1}, \subc{1}) = (\primefr{0}, \primefc{0})$, $(\subr{2}, \subc{2}) = (\primefr{1}, \primefc{1})$ for the $3$ stages respectively. 


Using the CRT every 2D location $(0,0) \leq (i,j) < (N_x,N_y)$, is uniquely represented by $(\rr{0},\rr{1},\rr{2})$ and $(\rc{0},\rc{1},\rc{2})$, where $\rr{\ell} = i \text{ mod } \primefr{\ell}$ and $\rc{\ell} = j \text{ mod } \primefc{\ell}$, for $\ell = 0,1,2$. Then, a uniformly random choice of the 2D-support for a non-zero DFT coefficient corresponds to a uniformly random choice of the remainders $(\rr{0},\rr{1},\rr{2})$ and $(\rc{0},\rc{1},\rc{2})$. Using the sub-sampling-aliasing and the circular shift properties explained in Section~\ref{sec:extension} we obtain a sparse graph, shown in Fig.~\ref{fig:sparsegraph2}(a), representing the relation between the $\sparsity$ non-zero DFT coefficients of the 2D signal $\msignal$, and the bin-observations. Note that a non-zero DFT coefficient with a 2D support $(i,j)$ is connected to a check node indexed by $(\rr{0},\rr{1})$ and $(\rc{0},\rc{1})$ in stage $0$. 

Recall that each $\pprimef{\ell} = \primefr{\ell}\primefc{\ell}$ for $\ell = 0,1,2$. Then, any doublet $(0,0) \leq (i,j) < (\primefr{\ell},\primefc{\ell})$ can be uniquely represented by an integer $q = i\primefc{\ell} + j$. Using this mapping, we convert every doublet $(\rr{\ell},\rc{\ell})$ into a single number $\sr{\ell} = \rr{\ell}\primefc{\ell} + \rc{\ell}$, for $\ell = 0,1,2$. Thus, each check node can now be represented by a doublet instead of a quadruplet and each non-zero DFT coefficient can now be indexed by a triplet instead of a $6$-tuple. For example, a check node, in stage $0$, with a quadruplet index $((\rr{0},\rr{1}),(\rc{0},\rc{1}))$ can be represented by a doublet $(\sr{0},\sr{1})$. The re-arranged bi-partite graph with this new labeling is shown in Fig~\ref{fig:sparsegraph2}(b).

The bi-partite graph of Fig~\ref{fig:sparsegraph2}(b) has $\sparsity$ left nodes and $\pprimef{0}\pprimef{1} + \pprimef{1}\pprimef{2} +\pprimef{2}\pprimef{0}$ right nodes. Further, each left node has one uniformly random neighbor in each of the $3$ stages. This graph is a member of the ensemble of bi-partite graphs generated by a $3$-stage 1D-FFAST architecture, in \cite{Pawar:2013da}, designed for the following parameters: $\{\pprimef{0},\pprimef{1},\pprimef{2}\}$ is a set of co-prime integers such that $\pprimef{\ell} = O(\binsize)$, $N = \prod_{\ell=0}^2\pprimef{\ell}$ and $\sparsity = \pprimef{0}\pprimef{1}$, that is, $\sindex = 2/3$. Thus, the proof follows from the results of \cite{Pawar:2013da}, for $\sindex = 2/3$.
\eop

\section{Proof of Theorem~\ref{thm:main_r} for 2D-R-FFAST}\label{app:main_r}

In \cite{Pawar:2014gx}, we have shown that a noiseless 1D-FFAST framework can be made {\em noise-robust} by using $O(\log N)$, where $ N$ is an ambient signal dimension,  number of delay-chains per sub-sampling stage. Also, as explained in Section~\ref{sec:extension}, for 2D-DFT, we follow a {\em separable} approach and perform $2$ independent ratio-tests to determine the 2D support of a single-ton bin. Hence, all the results from \cite{Pawar:2014gx} follow and the noise robust 2D-FFAST algorithm computes $\sparsity$-sparse 2D-DFT in sub-linear time of $O(\sparsity\log^{4} N)$ using $O(\sparsity\log^{3}N)$ measurements.

\end{appendices}

\endgroup

\bibliographystyle{IEEEtran}
\bibliography{paper}
\end{document}